\newcolumntype{Y}{>{\centering\arraybackslash}X}
\DeclareMathOperator{\logit}{logit}
\DeclareMathOperator{\pdet}{pdet} 
\definecolor{OIblue}{HTML}{0072B2}
\definecolor{OIorange}{HTML}{E69F00}
\definecolor{OIgreen}{HTML}{009E73}
\definecolor{OIverm}{HTML}{D55E00}
\definecolor{OIpurple}{HTML}{CC79A7}
\definecolor{OIsky}{HTML}{56B4E9}
\definecolor{OIyellow}{HTML}{F0E442}
\definecolor{OIblack}{HTML}{000000}
\colorlet{highlightcolor}{OIblue!7}
\newtheorem{theorem}{Theorem}
\newtheorem{lemma}{Lemma}
\theoremstyle{remark}
\newtheorem{remark}{Remark}
\title{Self-Normalized Quantile Empirical Saddlepoint Approximation}
\author[1]{HOU Jian}
\author[1]{MENG Tan}
\author[1]{TIAN Maozai\thanks{CONTACT. Author Email: \href{mailto:mztian@ruc.edu.cn}{mztian@ruc.edu.cn}}}
\affil[1]{Center for Applied Statistics, School of Statistics, \protect\\ 
         Renmin University of China, Beijing 100872, China}
\newcommand{\keywords}[1]{%
  \par\addvspace{1em} 
  \noindent\textbf{Keywords:} #1
}
\begin{document}
\date{}
\maketitle

\begin{abstract}
We propose a density-free method for frequentist inference on population quantiles, termed \emph{Self-Normalized Quantile Empirical Saddlepoint Approximation} (SNQESA). The approach builds a self-normalized pivot from the indicator score for a fixed quantile threshold and then employs a constrained empirical saddlepoint approximation to obtain highly accurate tail probabilities. Inverting these tail areas yields confidence intervals and tests without estimating the unknown density at the target quantile, thereby eliminating bandwidth selection and the boundary issues that affect kernel-based Wald/Hall–Sheather intervals. Under mild local regularity, the resulting procedures attain higher-order tail accuracy and second-order coverage after inversion. Because the pivot is anchored in a bounded Bernoulli reduction, the method remains reliable for skewed and heavy-tailed distributions and for extreme quantiles. Extensive Monte Carlo experiments across light, heavy and multimodal distributions demonstrate that SNQESA delivers stable coverage and competitive interval lengths in small to moderate samples while being orders of magnitude faster than large-$B$ resampling schemes. An empirical study on Value-at-Risk with rolling windows further highlights the gains in tail performance and computational efficiency. The framework naturally extends to two-sample quantile differences and to regression-type settings, offering a practical, analytically transparent alternative to kernel, bootstrap, and empirical-likelihood methods for distribution-free quantile inference.
\end{abstract}
\keywords{Quantile Inference, Saddlepoint Approximation, Self-Normalization, Density-free, Confidence Intervals}

\section{Introduction}
Quantiles provide fundamental characterizations of distributional features, with wide application across economics, finance, biostatistics, and engineering for measuring location, tail risk, and inequality. Despite their prevalence, frequentist inference for population quantiles presents persistent challenges: the asymptotic variance of sample quantiles depends on the unknown density at the target quantile, which proves difficult to estimate accurately in small samples, with skewed or heavy-tailed distributions, or near distributional extremes. Classical kernel-based Wald and Hall-Sheather (HS) intervals, while widely used, exhibit sensitivity to bandwidth selection and boundary effects, often producing erratic coverage and unstable interval lengths in finite samples \citep{HallSheather1988,SheatherJones1991,Sheather2004,HyndmanFan1996}.

Bootstrap methods offer appealing alternatives through their general applicability and conceptual simplicity, including percentile and BCa intervals \citep{Efron1987,EfronTibshirani1993,DavisonHinkley1997}. However, bootstrap procedures can be computationally demanding and may perform inadequately for extreme quantiles or irregular problems. Refinements like subsampling and $m$-out-of-$n$ bootstrap have been proposed to enhance robustness in these settings \citep{PolitisRomanoWolf1999,BickelSakov2008}. Empirical likelihood (EL) constitutes another influential approach, preserving attractive likelihood-based properties without parametric specification. For quantile inference, smoothed EL achieves higher-order accuracy under regularity conditions, though implementations may encounter numerical difficulties from convex hull constraints and require additional smoothing decisions \citep{ChenHall1993,Owen2001}.

Saddlepoint methods provide an alternative pathway, delivering exceptional accuracy in small-sample tail approximations through exploitation of cumulant generating functions (CGFs). The celebrated Lugannani-Rice (LR) approximation and Barndorff-Nielsen $r^*$ correction exemplify this approach \citep{Daniels1954,LugannaniRice1980,BarndorffNielsen1986,BrazzaleDavisonReid2007}. When population CGFs are unavailable, empirical saddlepoint approximation (ESA) substitutes the empirical CGF, extending saddlepoint accuracy to nonparametric contexts \citep{Feuerverger1989}, with recent work clarifying estimation and implementation aspects \citep{HolcblatVitale2022}. Concurrently, self-normalization (SN) has emerged as a powerful framework for constructing pivots that circumvent explicit scale estimation while maintaining stability under heterogeneity and heavy tails, with extensive development for both independent and dependent data \citep{delaPenaLaiShao2009,Shao2010,Shao2015}.

Recent methodological developments further underscore the relevance of saddlepoint-based, density-free approaches. Saddlepoint techniques continue to provide accurate small-sample $p$-values beyond classical parametric settings, including spatial panel data models, discrete GLM score tests with rare events, and nonparametric rank-type procedures, reflecting both analytical tractability and numerical stability in modern implementations \citep{jiang2023saddlepoint,ko2022gwas,abd2023statistical}. The self-normalization paradigm has seen renewed development for change-point detection and robust testing, reinforcing the value of pivots that avoid explicit scale estimation under heterogeneity or heavy tails \citep{cheng2024general}. For extreme quantiles in regression, recent advances address high-dimensional and persistent-regressor environments, complementing our distribution-free perspective for i.i.d. settings \citep{liu2024unified}. These developments provide broader context for the current SNQESA proposal and its extensions to regression-type designs, where constrained ESA with rank-reduced curvature remains effective.

This paper introduces SNQESA for density-free quantile inference. We observe that the score function for a fixed quantile threshold takes only two values; self-normalizing the score sum by its internal quadratic form yields a pivotal statistic whose null distribution remains independent of the unknown quantile density. Leveraging this structure, we employ constrained saddlepoint arguments to approximate the self-normalized statistic's tail probability with LR/$r^*$-type accuracy, then invert the resulting $p$-values to construct confidence intervals. Unlike kernel-based Wald/HS intervals, our approach completely eliminates bandwidth selection. Compared to bootstrap and EL methods, it provides an analytical, non-resampling solution with higher-order small-sample accuracy and competitive or superior performance for tail quantiles.

\noindent\textbf{Contributions.}
\begin{enumerate}[leftmargin=1.5em]
\item \textbf{Density-free accurate quantile inference.} We develop a self-normalized pivot specifically for quantile testing and derive ESA-based LR/$r^*$ tail approximations that produce precise $p$-values without estimating $f(q_\tau)$. The method maintains computational efficiency and stability in small samples and at extreme quantiles.
\item \textbf{Higher-order guarantees.} Under mild regularity conditions, the proposed tail approximations achieve the higher-order accuracy characteristic of saddlepoint methods, while interval inversion delivers second-order coverage improvements typical of LR/$r^*$ calibrations \citep{BarndorffNielsen1986,BrazzaleDavisonReid2007}.
\item \textbf{Robustness to distributional characteristics.} Self-normalization reduces sensitivity to unknown scale and tail behavior, aligning with recent advances advocating SN procedures in non-Gaussian, dependent, or heteroskedastic environments \citep{Shao2010,Shao2015}.
\item \textbf{Extensibility.} The framework naturally extends to two-sample quantile differences and regression-type settings, complementing developments for extremal quantiles that employ extreme-value approximations \citep{Chernozhukov2011,Chernozhukov2017,Koenker2005}.
\end{enumerate}

Empirical evaluations across diverse distributions—including skewed, heavy-tailed, and multimodal cases—and spanning moderate to extreme quantiles demonstrate that our method provides stable coverage and competitive interval lengths relative to kernel-based Wald/HS, smoothed EL, and bootstrap variants, while achieving substantial computational speed advantages over resampling methods. These results, combined with the method's transparency and straightforward implementation, position SNQESA as a practical default for distribution-free quantile inference in small to moderate samples.

\section{Method}\label{sec:method}

This section develops a density-free procedure for quantile inference that combines a self-normalized pivot with a constrained empirical saddlepoint approximation (ESA). Let $X_1,\dots,X_n$ be an i.i.d.\ draws from a continuous distribution $F$, fix $\tau\in(0,1)$, and denote the target quantile by $q_\tau=F^{-1}(\tau)$. For any threshold $t\in\mathbb{R}$ define $Y_i(t)=\mathbf{1}\{X_i\le t\}$ and the two-point score
\begin{equation*}
    \psi_\tau(X_i;t)=\tau-Y_i(t)\in\{\tau-1,\tau\}.
\end{equation*}
Write the partial sum, the sum of squares and the self-normalized statistic as
\begin{equation}
    S_n(t)=\sum_{i=1}^n\psi_\tau(X_i;t),\quad
    Q_n(t)=\sum_{i=1}^n\psi_\tau^2(X_i;t),\quad
    T_n(t)=\frac{S_n(t)}{\sqrt{Q_n(t)}}.
\end{equation}
Under the simple null $H_0:F(t)=\tau$, $Y_i(t)\stackrel{\text{i.i.d.}}{\sim}\mathrm{Bernoulli}(\tau)$, hence the joint distribution of $(S_n(t),Q_n(t))$ is free of the unknown density $f(q_\tau)$. Let $W_i(t)=(\psi_\tau(X_i;t),\psi_\tau^2(X_i;t))^\top$. Then $W_i$ takes two support points
\begin{equation*}
    \begin{aligned}
        w_1&=(\tau-1,(1-\tau)^2)^\top\ \text{with prob. }\tau,\\
        w_0&=(\tau,\tau^2)^\top\ \text{with prob. }1-\tau,
    \end{aligned}
\end{equation*}
and its one-observation moment generating function (MGF) and cumulant generating function (CGF) are
\begin{equation*}
    M(\lambda)=\tau e^{\lambda_1(\tau-1)+\lambda_2(1-\tau)^2}+(1-\tau)e^{\lambda_1\tau+\lambda_2\tau^2},\quad
    K(\lambda)=\log M(\lambda),\quad
    \lambda=(\lambda_1,\lambda_2)^\top.
\end{equation*}
Introduce the tilted success probability
\begin{equation}
    p_\lambda=\frac{\tau e^{\lambda_1(\tau-1)+\lambda_2(1-\tau)^2}}{\tau e^{\lambda_1(\tau-1)+\lambda_2(1-\tau)^2}+(1-\tau)e^{\lambda_1\tau+\lambda_2\tau^2}}\in(0,1),
\end{equation}
so that the unit-scale gradient and Hessian take the rank-1 form
\begin{equation}
    \begin{aligned}
        \mu(\lambda)&=\nabla K(\lambda)=\big(\tau-p_\lambda,\ \tau^2+p_\lambda(1-2\tau)\big)^\top,\\
        \Sigma(\lambda)&=\nabla^2K(\lambda)=p_\lambda(1-p_\lambda)vv^\top,\\
        v&=w_1-w_0=(-1,1-2\tau)^\top.
    \end{aligned}
\end{equation}
Thus $\mathrm{rank}\Sigma(\lambda)=1$ with image $\operatorname{span}(v)$; all curvature computations reduce to this one-dimensional subspace via pseudo-determinants and the Moore-Penrose inverse. We adopt a two-dimensional constrained ESA in $W_i(t)$ because the rank-1 geometry yields closed-form curvature and LR quantities and provides a unified, robust engine for weighted, regression, and dependent designs where a one-parameter reduction is unavailable. In the i.i.d. case considered here, it reduces algebraically to the one-parameter binomial tilt recorded below.

Let the observed value be the signed statistic
\begin{equation*}
    x_{\mathrm{obs}}(t)=\frac{S_n(t)}{\sqrt{Q_n(t)}},
\end{equation*}
and impose the boundary in the observed direction through
\begin{equation*}
    g(s,q)=s-x_{\mathrm{obs}}\sqrt{q}.
\end{equation*}
Following the constrained formulation of \citet{Skovgaard2001} (see also \citealp{BrazzaleDavisonReid2007}), the most likely point where the tilted mean hits the boundary solves
\begin{equation}
    n\nabla K(\hat\lambda)=\hat\mu=(\hat s,\hat q)^\top,\quad
    g(\hat s,\hat q)=0,\quad
    \hat\lambda=\hat\eta\nabla g(\hat\mu),
\end{equation}
with $J(\lambda)=n\nabla^2K(\lambda)=np_\lambda(1-p_\lambda)vv^\top$ and
\begin{equation*}
    \nabla g(\mu)=\left(1,-\frac{x_{\mathrm{obs}}}{2\sqrt{\mu_2}}\right)^\top.
\end{equation*}
Using the unit-scale CGF $K$, we define the deviance and signed root on the sum scale
\begin{equation}
    D_{\text{sum}}=2n\{\hat\lambda^\top\mu(\hat\lambda)-K(\hat\lambda)\},\quad
    r=\operatorname{sgn}(\hat\eta)\sqrt{D_{\text{sum}}}.
\end{equation}
Equivalently, if $D=2\{\hat\lambda^\top\mu(\hat\lambda)-K(\hat\lambda)\}$ is the unit-scale deviance, then $D_{\text{sum}}=nD$ and $r=\operatorname{sgn}(\hat\eta)\sqrt{nD}$; we work with $D_{\text{sum}}$ to align with common saddlepoint conventions. Since under $H_0$ the pivot is a discrete transform of a binomial count, we evaluate directed tails by default with a mid-$p$ modification
\[
p_{\mathrm{mid}}(t)=\mathbb{P}\{T_n(t)>x_{\mathrm{obs}}(t)\}+\tfrac{1}{2}\mathbb{P}\{T_n(t)=x_{\mathrm{obs}}(t)\},
\]
or with a small Cornish-Fisher continuity shift; both preserve the third-order tail accuracy and the $O(n^{-1})$ coverage after inversion (see \citealp{Daniels1954,LugannaniRice1980,BarndorffNielsen1986,BrazzaleDavisonReid2007}).

The directed one-sided tail, namely the upper tail if $x_{\mathrm{obs}}\ge0$ and the lower tail if $x_{\mathrm{obs}}<0$, is approximated by the Barndorff-Nielsen correction
\begin{equation}
    r^*=r+\frac{1}{r}\log\frac{r}{w},\quad
    p_{\mathrm{dir}}(t)\approx\Phi\big(-\operatorname{sgn}(x_{\mathrm{obs}})r^*\big),
\end{equation}
where the curvature factor reduces from Skovgaard’s general rank-1 form to a closed expression. Since $J=n\Sigma$ with $\pdet(J)=np_{\hat\lambda}(1-p_{\hat\lambda})\|v\|^2$ and $\Sigma(\hat\lambda)^+=\{p_{\hat\lambda}(1-p_{\hat\lambda})\}^{-1}vv^\top/\|v\|^4$, we obtain
\begin{equation}
    w=|\hat\eta|\sqrt{n}p_{\hat\lambda}(1-p_{\hat\lambda})\frac{\|v\|^3\|\nabla g(\hat\mu)\|}{|\nabla g(\hat\mu)\cdot v|}.
\end{equation}
We also define the signed Lugannani-Rice quantity
\begin{equation}
    q^{\pm}=\big(\logit(\hat p)-\logit(\tau)\big)\sqrt{n\hat p(1-\hat p)}.
\end{equation}
When $|\log(r/w)|\le c_0$ (default $c_0=2$) we use the $r^*$ tail; otherwise we fall back to the signed Lugannani-Rice expansion
\begin{equation}
    p_{\mathrm{dir}}(t)\approx\Phi\big(-\operatorname{sgn}(x_{\mathrm{obs}})r\big)+\phi(r)\left(\frac{1}{r}-\frac{1}{q^{\pm}}\right).
\end{equation}
A two-sided $p$-value is finally reported as
\begin{equation}
    p_{\mathrm{two\mbox{-}sided}}(t)=2\min\{p_{\uparrow}(t),p_{\downarrow}(t)\},\quad
    p_{\uparrow}(t)=\mathbb P\{T_n(t)\ge x_{\mathrm{obs}}\},\quad
    p_{\downarrow}(t)=\mathbb P\{T_n(t)\le x_{\mathrm{obs}}\}.
\end{equation}
For numerical stability we enforce a degeneracy check $|\nabla g(\hat\mu)\cdot v|>\delta$ (we use $\delta=10^{-10}$); if violated we revert to the one-parameter evaluation at $u_x$ defined below and compute $r,q^{\pm},r^*$ from the binomial tilt (in that branch $w=q^{\pm}$). In the constrained solver we bound $|\hat\eta|$ via trust-region damping and a backtracking line search to avoid overflow when $\nabla g(\hat\mu)\cdot v$ is near zero.

We now record an algebraic reduction to a one-parameter binomial tilt and the induced equality of tilts. With $\bar Y_n(t)=n^{-1}\sum_{i=1}^nY_i(t)$ there exists a strictly monotone map
\begin{equation*}
    T_n(t)=h\big(\bar Y_n(t)\big),\quad
    h(u)=\frac{\sqrt n(\tau-u)}{\sqrt{u(1-\tau)^2+(1-u)\tau^2}},
\end{equation*}
so that $\{T_n(t)\ge x_{\mathrm{obs}}\}$ is equivalent to $\{\bar Y_n(t)\le u_x\}$ with $u_x=h^{-1}(x_{\mathrm{obs}})$. At the constrained solution we have $\hat\mu=(\tau-\hat p,\ \tau^2+\hat p(1-2\tau))^\top$ and the boundary $g(\hat\mu)=0$ reads
\begin{equation*}
    \tau - \hat p = x_{\mathrm{obs}}\sqrt{\tau^2 + \hat p(1-2\tau)}.
\end{equation*}
Hence $h(\hat p) = x_{\mathrm{obs}}$ and, by strict monotonicity of $h$, we obtain the identity of tilts
\begin{equation*}
    \hat p = p_{\hat\lambda} = u_x.
\end{equation*}
Consequently, the scalars $(r,q^{\pm},r^*)$ computed from the constrained ESA coincide with their one-parameter binomial counterparts at $u_x$, which explains the third-order tail accuracy and provides a numerically robust one-dimensional fallback. Standard one-parameter saddlepoint theory for exponential families (including lattice cases with continuity corrections) yields $O(n^{-3/2})$ relative error for directed tails, and inversion then implies $O(n^{-1})$ endpoint perturbations and coverage error (see \citealp{Daniels1954,LugannaniRice1980,BarndorffNielsen1986,BrazzaleDavisonReid2007}); in our setting this transfers through the Bernoulli two-point representation and self-normalization without additional assumptions (cf.\ \citealp{delaPenaLaiShao2009,Shao2010,Shao2015}).

Inversion delivers confidence limits. Let $p_{\uparrow}(t)$ and $p_{\downarrow}(t)$ denote the directed upper and lower tails defined above. The equal-tailed $(1-\alpha)$ interval is obtained by solving
\begin{equation*}
    p_{\downarrow}(t_L)=\alpha/2,\quad p_{\uparrow}(t_U)=\alpha/2,
\end{equation*}
using monotonicity of $p_{\downarrow}$ on $(-\infty,\hat q_\tau]$ and of $p_{\uparrow}$ on $[\hat q_\tau,\infty)$. Bracketing on a coarse grid that expands outward from the sample quantile $\hat q_\tau$ followed by bisection to a tight tolerance is effective in practice. With ties or discretization we use the mid-$p$ modification or a Cornish-Fisher continuity shift; both preserve the $O(n^{-1})$ coverage error when combined with the third-order tail accuracy of the ESA.

A few numerical safeguards complete the method. To avoid boundary singularities for extreme quantiles or very small $n$, we add a vanishing ridge to $Q_n$,
\begin{equation}
    Q_n(t)\leftarrow Q_n(t)+\epsilon_n,\quad \epsilon_n\to0,
\end{equation}
and in practice set $\epsilon_n=cn^{-1/2}$ with a small constant $c$, which yields $x_{\mathrm{obs}}^{(\mathrm{ridge})}-x_{\mathrm{obs}}=O_p(n^{-3/2})$ and does not affect either the third-order tail accuracy or the $O(n^{-1})$ coverage. The constrained system is solved by a damped Newton or trust-region iteration; because $J(\lambda)$ is rank-1, each linear step reduces to one-dimensional algebra along $v$ and is solved stably using pseudo-determinants $\pdet(\Sigma(\lambda))=p_\lambda(1-p_\lambda)\|v\|^2$ and the Moore-Penrose inverse $\Sigma(\lambda)^+=\{p_\lambda(1-p_\lambda)\}^{-1}vv^\top/\|v\|^4$. The closed form of $w$ above follows from the rank-1 projection and the identities for $\pdet(J)$ and $\Sigma^+$; when the degeneracy check fails we use the one-parameter binomial evaluation at $u_x$ for both $r^*$ and LR tails.

\section{Theoretical properties}\label{sec:theory}

We establish the accuracy of the tail approximation produced by the self-normalized empirical saddlepoint method in Section \ref{sec:method}, and quantify the coverage error of the confidence intervals obtained by inversion. Throughout we work under the following regularity conditions.

\medskip\noindent
(A1) Identification at the target quantile. $F$ is strictly increasing at $q_\tau=F^{-1}(\tau)$; there exists a neighborhood $\mathcal{N}(q_\tau)$ with $0<\inf_{t\in\mathcal{N}(q_\tau)} f(t)\le \sup_{t\in\mathcal{N}(q_\tau)} f(t)<\infty$.

\medskip\noindent
(A2) CGF regularity and interior solvability. For the two–point score $\psi_\tau(X_i;t)=\tau-\mathbf{1}\{X_i\le t\}$ and its quadratic companion, the bivariate cumulant generating function $K(\lambda)$ exists on $\mathbb{R}^2$, is $C^3$, strictly convex and steep; for each $t\in\mathcal{N}(q_\tau)$ the saddlepoint equations $\nabla K(\lambda)=m(t)$ admit a unique solution with $\lambda(t)$ in the interior of the natural parameter space.

\medskip\noindent
(A3) Sampling and weak dependence. Either i.i.d.\ sampling, or strictly stationary $\alpha$–mixing with standard summability and finite $(2+\delta)$th moments for the quadratic score; the processes $\{\psi_\tau(X_i;t),\,\psi_\tau^2(X_i;t)\}$ are uniformly integrable on $\mathcal{N}(q_\tau)$.

\medskip\noindent
(A4) Local uniformity. Uniform LLN and stochastic equicontinuity hold for $t\mapsto n^{-1}\sum_{i=1}^n g_t(X_i)$ with $g_t\in\{\psi_\tau(\cdot;t),\,\psi_\tau^2(\cdot;t)\}$ on a shrinking neighborhood $\mathcal{N}_n(q_\tau)\downarrow\{q_\tau\}$; the solution $t\mapsto \lambda_n(t)$ varies continuously on $\mathcal{N}_n(q_\tau)$.

\medskip\noindent
(A5) Numerical well-posedness. The empirical saddlepoint and endpoint inversion problems are well-posed and have unique solutions; the numerical solver attains them within tolerance $o(n^{-1})$ and stays in the feasible parameter region.

(A1) is the standard identification condition for quantile inference and underlies local expansions around $q_\tau$ \citep{KoenkerBassett1978,Koenker2005,Serfling1980}. (A2) collects the classical regularity requirements that guarantee existence/uniqueness of the saddlepoint and justify Lugannani–Rice/$r^\ast$ accuracy for tail probabilities \citep{Daniels1954,LugannaniRice1980,Jensen1995,Butler2007,Reid2003,Kolassa2006}. (A3) allows either i.i.d.\ data or short-range dependence; with a bounded score, only mild moment and mixing conditions are needed, and self-normalized limit theory applies \citep{JingShaoWang2003,delaPenaLaiShao2009,Shao2010,Doukhan1994,Rio2000}. (A4) provides the uniform empirical-process control required to move from pointwise to uniform statements in a neighborhood of $q_\tau$ \citep{vanderVaartWellner1996,Jensen1995}. (A5) ensures that numerical error is asymptotically negligible relative to the $O(n^{-1})$ analytical remainder in saddlepoint approximations \citep{Butler2007,Kolassa2006,Reid2003}.

For a fixed threshold $t\in\mathbb{R}$ we keep the notation from Section \ref{sec:method}:
\begin{equation*}
    Y_i(t)=\mathbf{1}\{X_i\le t\},\quad
    S_n(t)=\sum_{i=1}^n(\tau-Y_i(t)),\quad
    Q_n(t)=\sum_{i=1}^n(\tau-Y_i(t))^2,\quad
    T_n(t)=\frac{S_n(t)}{\sqrt{Q_n(t)}},
\end{equation*}
and the observed value is $x_{\mathrm{obs}}(t)=T_n(t)$. Under the simple null $H_0:F(t)=\tau$, $Y_i(t)$ are i.i.d. Bernoulli$(\tau)$. We also use the unit-scale CGF $K(\lambda)$ and its rank-1 derivatives summarized in Section \ref{sec:method}. The directed one-sided tail is
\begin{equation*}
    p_{\mathrm{dir}}(t)=
    \begin{cases}
        \mathbb{P}\{T_n(t)\ge x_{\mathrm{obs}}(t)\}, & x_{\mathrm{obs}}(t)\ge 0,\\
        \mathbb{P}\{T_n(t)\le x_{\mathrm{obs}}(t)\}, & x_{\mathrm{obs}}(t)<0,
    \end{cases}
\end{equation*}
and the two-sided value is $p_{\mathrm{two\mbox{-}sided}}(t)=2\min\{p_{\uparrow}(t),p_{\downarrow}(t)\}$.

A basic reduction connects $T_n(t)$ with the binomial sample mean $\bar Y_n(t)=n^{-1}\sum_{i=1}^nY_i(t)$ via a strictly monotone transform. Define
\begin{equation}\label{eq:def-h}
    h(u)=\frac{\sqrt n(\tau-u)}{\sqrt{u(1-\tau)^2+(1-u)\tau^2}},\quad d(u)=\tau^2+u(1-2\tau).
\end{equation}
Elementary algebra shows $h'(u)=-\sqrt n\{(1-2\tau)u+\tau\}/\{2d(u)^{3/2}\}<0$ for $u\in(0,1)$ and hence the following holds.

\begin{lemma}[Binomial reduction]\label{lem:bin-reduction}
Under $H_0:F(t)=\tau$, one has $T_n(t)=h(\bar Y_n(t))$. In particular, for any real $x$, writing $u_x=h^{-1}(x)\in(0,1)$, the event $\{T_n(t)\ge x\}$ is equivalent to $\{\bar Y_n(t)\le u_x\}$.
\end{lemma}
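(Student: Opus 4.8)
The plan is to verify the claimed identity $T_n(t)=h(\bar Y_n(t))$ by direct computation, and then deduce the event equivalence from the strict monotonicity of $h$ already recorded just above the statement. The work is almost entirely algebraic, so I would organize it around making the two-valued structure of the score explicit rather than carrying around abstract sums.

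First I would compute $S_n(t)$ and $Q_n(t)$ in terms of $\bar Y_n(t)$. Writing $m=\sum_{i=1}^n Y_i(t)=n\bar Y_n(t)$ for the binomial count, each summand of $S_n(t)=\sum_i(\tau-Y_i(t))$ equals $\tau$ when $Y_i=0$ and $\tau-1$ when $Y_i=1$, so $S_n(t)=(n-m)\tau+m(\tau-1)=n\tau-m=n(\tau-\bar Y_n)$. Similarly each summand of $Q_n(t)=\sum_i(\tau-Y_i(t))^2$ equals $\tau^2$ when $Y_i=0$ and $(\tau-1)^2=(1-\tau)^2$ when $Y_i=1$, so $Q_n(t)=(n-m)\tau^2+m(1-\tau)^2=n\{\bar Y_n(1-\tau)^2+(1-\bar Y_n)\tau^2\}$. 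Substituting these into $T_n(t)=S_n(t)/\sqrt{Q_n(t)}$ gives
\begin{equation*}
    T_n(t)=\frac{n(\tau-\bar Y_n)}{\sqrt{n\{\bar Y_n(1-\tau)^2+(1-\bar Y_n)\tau^2\}}}=\frac{\sqrt n\,(\tau-\bar Y_n)}{\sqrt{\bar Y_n(1-\tau)^2+(1-\bar Y_n)\tau^2}}=h(\bar Y_n(t)),
\end{equation*}
which is exactly the definition of $h$ in \eqref{eq:def-h} evaluated at $u=\bar Y_n(t)$. This establishes the first assertion.

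For the second assertion, I would invoke the fact, stated immediately before the lemma, that $h'(u)<0$ on $(0,1)$, so $h$ is strictly decreasing and therefore a bijection from $(0,1)$ onto its range with a well-defined inverse $h^{-1}$. Given real $x$ in the range of $h$, set $u_x=h^{-1}(x)\in(0,1)$. Because $h$ is strictly decreasing, $T_n(t)=h(\bar Y_n(t))\ge x=h(u_x)$ holds if and only if $\bar Y_n(t)\le u_x$; the inequality reverses precisely because of the negative sign of $h'$. This gives the claimed equivalence $\{T_n(t)\ge x\}=\{\bar Y_n(t)\le u_x\}$.

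I do not anticipate a genuine obstacle here, since both parts are immediate once the two-point reduction is written out; the only point requiring mild care is the domain of $h$ and the direction reversal under inversion. Strictly speaking $\bar Y_n(t)$ takes values in the lattice $\{0,1/n,\dots,1\}$, so one should note that the endpoints $u\in\{0,1\}$ make $Q_n(t)$ degenerate, and that in finite samples $x$ should be taken in the range actually attained by $T_n$; the vanishing ridge $\epsilon_n$ introduced in Section~\ref{sec:method} handles the boundary cases $m\in\{0,n\}$ so that the monotone correspondence remains well posed. These are bookkeeping remarks rather than substantive difficulties, and the core identity follows purely from evaluating the two possible values of the score.
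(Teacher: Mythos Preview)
Your proposal is correct and follows essentially the same route as the paper: both compute $S_n(t)$ and $Q_n(t)$ from the two-point score to obtain $T_n(t)=h(\bar Y_n(t))$, then invoke $h'(u)<0$ on $(0,1)$ to deduce the event equivalence. Your additional remarks on the lattice support of $\bar Y_n(t)$ and the boundary cases are harmless elaborations not present in the paper's proof, but the core argument is identical.
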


The reduction allows us to adopt the one-parameter binomial saddlepoint scalars in a signed form. Let
\begin{equation}\label{eq:binom-r-q}
    r=\operatorname{sgn}(u_x-\tau)\sqrt{2n\mathrm{KL}(u_x\|\tau)},\quad
    q^{\pm}=(\logit(u_x)-\logit(\tau))\sqrt{n u_x(1-u_x)},
\end{equation}
where $\mathrm{KL}(u\|\tau)=u\log\{u/\tau\}+(1-u)\log\{(1-u)/(1-\tau)\}$. In our constrained rank-1 construction the solution satisfies $\hat p=p_{\hat\lambda}=u_x$, and the signed root deviance and the Barndorff-Nielsen correction coincide with their binomial counterparts. Writing
\begin{equation}\label{eq:r-rstar}
    D=2\{\hat\lambda^\top\mu(\hat\lambda)-K(\hat\lambda)\},\quad
    r=\operatorname{sgn}(\hat\eta)\sqrt{nD},\quad
    r^*=r+\frac{1}{r}\log\frac{r}{w},
\end{equation}
and recalling that $\hat p=u_x$, we obtain third-order tail accuracy.

\begin{theorem}[Third-order accuracy of the directed tail]\label{thm:tail}
Assume (A1)-(A4). Fix a bounded $x$ and suppose the constrained solution exists with $\hat p \in(0, 1)$ bounded away from the endpoints. Under $H_0:F(t)=\tau$, uniformly on compact $x$-sets,
\begin{equation}\label{eq:rstar-accuracy}
    p_{\mathrm{dir}}(t)=\Phi\big(-\operatorname{sgn}(x)r^*\big)\{1+O(n^{-3/2})\}.
\end{equation}
Equivalently, with the signed Lugannani-Rice quantity $q^{\pm}$ in \eqref{eq:binom-r-q}, the directed tail satisfies the signed LR form
\begin{equation}\label{eq:lr-accuracy}
    p_{\mathrm{dir}}(t)=
    \begin{cases}
        \Phi(r)+\phi(r)\left(\dfrac{1}{r}-\dfrac{1}{q^{\pm}}\right)+O(n^{-3/2}), & x\ge 0,\\[6pt]
        \Phi(-r)+\phi(r)\left(\dfrac{1}{r}-\dfrac{1}{q^{\pm}}\right)+O(n^{-3/2}), & x<0,
    \end{cases}
\end{equation}
where $r$ carries the sign of $u_x-\tau$ and $q^{\pm}$ carries the same sign as $r$.
\end{theorem}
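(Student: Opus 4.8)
The plan is to collapse the self-normalized tail onto an ordinary binomial tail via the exact monotone reduction of Lemma~\ref{lem:bin-reduction}, and then to import the classical third-order theory for one-parameter exponential families. First I would observe that under $H_0$ the count $n\bar Y_n(t)=\sum_{i=1}^n Y_i(t)$ is exactly $\mathrm{Binomial}(n,\tau)$, and that because $h$ is strictly decreasing the directed event $\{T_n(t)\ge x\}$ is identical to $\{\bar Y_n(t)\le u_x\}$ with $u_x=h^{-1}(x)$. Consequently $p_{\mathrm{dir}}(t)$ is literally a lattice tail of a binomial count evaluated at the threshold fixed by $u_x$; no density enters, and the problem is reduced to approximating a discrete tail of a one-parameter exponential family.

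The next step is to match the constrained ESA scalars to their binomial counterparts. The binomial natural parameter is the log-odds $\theta=\logit(p)$, so tilting from $\tau$ to the saddlepoint value produces the natural-parameter departure $\logit(u_x)-\logit(\tau)$ and the Kullback–Leibler deviance $2n\,\mathrm{KL}(u_x\|\tau)$; these generate exactly the signed root $r$ and the quantity $q^{\pm}$ in \eqref{eq:binom-r-q}. Here I would invoke the identity $\hat p=p_{\hat\lambda}=u_x$ established in Section~\ref{sec:method} (it follows from the boundary condition $g(\hat\mu)=0$ and strict monotonicity of $h$), which forces the constrained deviance $D$, signed root $r$, curvature factor, and $r^*$ in \eqref{eq:r-rstar} to collapse onto the binomial values. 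This identification is what transfers the higher-order expansions from the discrete exponential family to the self-normalized pivot, and it lets me carry out the remaining analysis entirely on the binomial scale.

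With the reduction in place I would apply the standard accuracy statement for the directed tail of a one-parameter exponential family: the Lugannani–Rice expansion and the Barndorff–Nielsen $r^*$ correction carry relative error $O(n^{-3/2})$ uniformly on compact sets of the standardized argument \citep{Daniels1954,LugannaniRice1980,BarndorffNielsen1986,BrazzaleDavisonReid2007}, which yields the two equivalent representations \eqref{eq:rstar-accuracy} and \eqref{eq:lr-accuracy}. For fixed bounded $x$ one has $u_x=\tau+O(n^{-1/2})$, so $r$ and $q^{\pm}$ are $O(1)$ and we sit in the moderate-deviation regime where these expansions are sharpest; the removable singularity at $r=0$ (reached when $x=0$) is handled, as usual, by analytic continuation of both the LR bracket and $r^*$. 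Uniformity over a compact $x$-set follows because $x\mapsto u_x=h^{-1}(x)$ is continuous with $u_x$ confined to a compact subinterval of $(0,1)$ bounded away from $\{0,1\}$ under the stated hypothesis, so the saddlepoint map $x\mapsto\hat\lambda(x)$ varies continuously; the $C^3$, strictly convex, steep CGF of (A2) then bounds uniformly the third derivatives entering the remainder, while (A4) supplies the uniform empirical-process control needed to pass from pointwise to uniform statements.

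The main obstacle is the lattice character of the binomial count: the bare continuous-form Lugannani–Rice and $r^*$ formulas applied to a discrete tail carry an oscillatory discreteness term (of order $n^{-1/2}$ in relative error) rather than $O(n^{-3/2})$. I would dispose of this exactly as prescribed in Section~\ref{sec:method}, by evaluating the directed tail under the mid-$p$ convention (equivalently, the Cornish–Fisher continuity shift), which averages over the lattice jump and restores the third-order relative accuracy; the classical lattice saddlepoint analysis shows the residual discreteness contributes only at $O(n^{-3/2})$ and does not perturb the leading expansion. Assembling the exact reduction, the scalar identification, and the continuity-corrected one-parameter accuracy then delivers \eqref{eq:rstar-accuracy} and \eqref{eq:lr-accuracy}.
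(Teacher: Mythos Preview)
Your proposal is correct and follows essentially the same route as the paper: reduce the directed tail to a binomial lower tail via Lemma~\ref{lem:bin-reduction}, identify the constrained ESA scalars with the one-parameter binomial ones through the boundary identity $\hat p=u_x$, and then import the classical $O(n^{-3/2})$ LR/$r^*$ accuracy for one-parameter exponential families. Your explicit handling of the lattice issue (mid-$p$/continuity shift) and of uniformity over compact $x$-sets is in fact more detailed than the paper's own proof, which simply cites the standard expansions and defers discreteness to the remarks in Section~\ref{sec:method}.
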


Confidence limits follow by inversion of directed tails. Let $t_L$ and $t_U$ solve $p_{\downarrow}(t_L)=\alpha/2$ and $p_{\uparrow}(t_U)=\alpha/2$. The mapping $t\mapsto p_{\downarrow}(t)$ is monotone on $(-\infty,\hat q_\tau]$ and $t\mapsto p_{\uparrow}(t)$ on $[\hat q_\tau,\infty)$ under (A2). A second-order coverage result is obtained by linearizing the composition $t\mapsto F(t)\mapsto p$ at $t=q_\tau$. Write
\begin{equation}\label{eq:z-of-t}
    z(t)=\frac{\sqrt n(F(t)-\tau)}{\sqrt{\tau(1-\tau)}},
\end{equation}
so that for the exact tail $p^\star(t)=\Phi(-z(t))+O(n^{-1})$. The sensitivity at $q_\tau$ is
\begin{equation}\label{eq:dpdt}
    \frac{dp^\star}{dt}\Big|_{t=q_\tau}=-\phi(0)\frac{\sqrt n f(q_\tau)}{\sqrt{\tau(1-\tau)}}.
\end{equation}
Combining \eqref{eq:rstar-accuracy} and \eqref{eq:dpdt} yields the following.

\begin{theorem}[Second-order coverage]\label{thm:coverage}
Assume (A1)-(A2). The equal-tailed $(1-\alpha)$ interval obtained by inverting the directed ESA tails satisfies
\begin{equation}\label{eq:coverage-order}
    \sup_{F\in\mathcal{F}}\Big|\mathbb{P}_F\{q_\tau\in[t_L,t_U]\}-(1-\alpha)\Big|=O(n^{-1}),
\end{equation}
uniformly over families $\mathcal{F}$ with $f(q_\tau)$ bounded and bounded away from zero on a neighborhood of $q_\tau$. The constant is scale-modulated by $f(q_\tau)^{-1}$ through \eqref{eq:dpdt}.
\end{theorem}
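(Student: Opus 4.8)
The plan is to exploit test-inversion duality to turn the coverage statement into a statement about the \emph{size} of the directed tests evaluated at the true threshold $t=q_\tau$, where the null $H_0:F(q_\tau)=\tau$ holds exactly. Under (A2) the maps $t\mapsto p_\downarrow(t)$ and $t\mapsto p_\uparrow(t)$ are monotone on their respective half-lines, so the acceptance region of the equal-tailed two-sided test is the interval $[t_L,t_U]$ and the coverage event is
\[
\{q_\tau\in[t_L,t_U]\}=\{p_\uparrow(q_\tau)\ge\alpha/2\}\cap\{p_\downarrow(q_\tau)\ge\alpha/2\}.
\]
Because $p_\uparrow(q_\tau)+p_\downarrow(q_\tau)=1+\mathbb{P}\{T_n(q_\tau)=x_{\mathrm{obs}}(q_\tau)\}\ge1$, the two rejection events $\{p_\uparrow<\alpha/2\}$ and $\{p_\downarrow<\alpha/2\}$ are disjoint for $\alpha\le1$, so it suffices to show that each directed tail has size $\alpha/2+O(n^{-1})$ at the true null; summing then yields rejection probability $\alpha+O(n^{-1})$ and coverage $1-\alpha+O(n^{-1})$.

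First I would record the density-free reduction at $t=q_\tau$. Continuity of $F$ gives $\mathbb{P}\{X_i\le q_\tau\}=F(q_\tau)=\tau$ exactly, so the count $K:=\sum_{i=1}^n\mathbf{1}\{X_i\le q_\tau\}\sim\mathrm{Binomial}(n,\tau)$ \emph{regardless of} $F$. By Lemma~\ref{lem:bin-reduction} the observed pivot $x_{\mathrm{obs}}(q_\tau)$ is a strictly monotone function of $\bar Y_n(q_\tau)=K/n$, and the directed ESA tails are deterministic functions of $K$. This is the crucial point: the entire null law of the $p$-value at $t=q_\tau$ depends on $F$ only through $\tau$, not through $f(q_\tau)$ or the shape of $F$. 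Hence the coverage at the true quantile is a single number depending on $(n,\tau,\alpha)$, and the supremum over $\mathcal{F}$ reduces to controlling $F$-dependent remainder terms.

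Next I would replace the ESA tail by the exact binomial tail and analyze the lattice. By Theorem~\ref{thm:tail} the directed ESA tail equals the exact binomial tail up to the relative factor $1+O(n^{-3/2})$; since the level $\alpha/2$ is fixed and bounded away from $0$ and $1$ and the $p$-value density there is $O(1)$, this perturbs the event $\{p\le\alpha/2\}$ by $O(n^{-3/2})$, negligible against $O(n^{-1})$. It then remains to analyze the exact, continuity-corrected binomial tail evaluated at $K\sim\mathrm{Binomial}(n,\tau)$. Using the lattice Edgeworth expansion for the standardized binomial together with the linearization $p^\star(t)=\Phi(-z(t))+O(n^{-1})$ of the continuity-corrected tail, the mid-$p$ modification (equivalently the Cornish--Fisher continuity shift) cancels the leading $O(n^{-1/2})$ sawtooth/lattice term, giving $\mathbb{P}_F\{p^\star_\uparrow(q_\tau)\le\alpha/2\}=\alpha/2+O(n^{-1})$, and symmetrically for $p^\star_\downarrow$.

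Finally I would assemble uniformity and identify the constant. Uniformity over $\mathcal{F}$ follows because the leading term is $F$-free while the remainders are controlled by (A1)--(A5): the ESA relative error is $O(n^{-3/2})$ uniformly once $\hat p$ is bounded away from the endpoints near $q_\tau$, the solver error is $o(n^{-1})$ by (A5), and the neighborhood localization in (A4) is of lower order. The density enters only through the sensitivity $dp^\star/dt|_{t=q_\tau}=-\phi(0)\sqrt{n}\,f(q_\tau)/\sqrt{\tau(1-\tau)}$ in \eqref{eq:dpdt}, which is bounded and bounded away from zero by the hypothesis on $\mathcal{F}$; converting endpoint perturbations to coverage through this slope is exactly what produces the $f(q_\tau)^{-1}$ modulation of the $O(n^{-1})$ constant. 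I expect the main obstacle to be the lattice step: showing that the continuity correction upgrades the generic $O(n^{-1/2})$ discreteness error of a single binomial-based test to $O(n^{-1})$, which requires the lattice Edgeworth expansion and a verification that the first-order lattice term is annihilated by the mid-$p$/Cornish--Fisher shift. The duality and the saddlepoint accuracy borrowed from Theorem~\ref{thm:tail} are comparatively routine by contrast.
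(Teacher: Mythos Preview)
Your proposal is correct and takes a genuinely different route from the paper's proof. The paper argues on the \emph{endpoint} scale: it compares the ESA endpoints $\tilde t_L,\tilde t_U$ to the endpoints $t_L^\star,t_U^\star$ obtained from the exact binomial tails, uses the sensitivity $dp^\star/dt|_{q_\tau}=-\phi(0)\sqrt{n}\,f(q_\tau)/\sqrt{\tau(1-\tau)}$ to convert the $O(n^{-3/2})$ tail error from Theorem~\ref{thm:tail} into an $O(n^{-1})f(q_\tau)^{-1}$ endpoint perturbation, and then asserts that the coverage error inherits the $O(n^{-1})$ order. You instead work on the \emph{size} scale via test-inversion duality: the coverage event at $q_\tau$ is exactly the acceptance event, and since $K=\sum_i\mathbf{1}\{X_i\le q_\tau\}\sim\mathrm{Bin}(n,\tau)$ under any $F$ with $F(q_\tau)=\tau$, the size at the true null is purely binomial and $F$-free. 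Your route is more direct for the uniformity claim (the supremum over $\mathcal{F}$ becomes vacuous at leading order) and is more explicit about the lattice/mid-$p$ step that the paper leaves implicit in its final sentence. The paper's route, on the other hand, is what gives content to the ``scale-modulated by $f(q_\tau)^{-1}$'' clause: that factor lives in the \emph{endpoint} error $\tilde t_L-t_L^\star$, not in the coverage error itself. Your last paragraph's attempt to recover the $f(q_\tau)^{-1}$ modulation via ``converting endpoint perturbations to coverage'' is therefore a small non sequitur within your own framework---a pure duality argument has no endpoint perturbations, and your analysis in fact shows the leading coverage-error constant is $F$-free---but this does not affect the validity of the $O(n^{-1})$ bound you establish.
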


The self-normalization and the two-point structure imply robustness to heavy tails.

\begin{theorem}[Heavy-tail robustness]\label{thm:heavy}
Assume (A1) with $f(q_\tau)>0$. Then the tail accuracy \eqref{eq:rstar-accuracy} and the coverage order \eqref{eq:coverage-order} hold without any assumptions on the existence of moments of $X$ or its tail index.
\end{theorem}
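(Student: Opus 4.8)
The plan is to exploit that the whole pipeline depends on the data only through the indicators $Y_i(t)=\mathbf{1}\{X_i\le t\}$, so its behaviour at any fixed threshold is completely decoupled from the tail of $F$. By Lemma \ref{lem:bin-reduction}, $T_n(t)=h(\bar Y_n(t))$ with $h$ strictly monotone, and under $H_0:F(t)=\tau$ the vector $(Y_1(t),\dots,Y_n(t))$ is i.i.d.\ $\mathrm{Bernoulli}(\tau)$; hence the exact law of $T_n(t)$, the observed value $x_{\mathrm{obs}}(t)$, and the directed tail $p_{\mathrm{dir}}(t)$ are binomial-tail functionals of $(n,\tau)$ alone, with no dependence on $F$ beyond the single constraint $F(t)=\tau$. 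The two-point CGF $K(\lambda)$ and the scalars $r,q^{\pm},r^*$ are likewise functions of $\tau$ only. The tail-accuracy half of the theorem is therefore immediate: \eqref{eq:rstar-accuracy} is exactly the conclusion of Theorem \ref{thm:tail}, whose statement and remainder are fixed by the Bernoulli reduction and cannot reference a moment or tail index of $X$.

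Next I would check that the hypotheses invoked by Theorem \ref{thm:tail} hold automatically, free of any moment assumption, purely from boundedness of the score. Since $W_i(t)$ takes only the two values $w_1,w_0$, its cumulant generating function $K(\lambda)=\log M(\lambda)$ is finite on all of $\mathbb{R}^2$, hence $C^\infty$; strict convexity and steepness follow from $v=w_1-w_0\neq 0$, i.e.\ from $\tau\in(0,1)$, so (A2) holds unconditionally. The $(2+\delta)$th-moment and uniform-integrability requirements in (A3) are trivial because $|\psi_\tau(X_i;t)|\le 1$ and $\psi_\tau^2(X_i;t)\le 1$, and the uniform LLN and stochastic equicontinuity of (A4) hold for the bounded, monotone indicator class. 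The only genuine input is (A1), which constrains $f$ solely on a neighbourhood $\mathcal{N}(q_\tau)$; heavy tails of $F$ away from $q_\tau$ never enter, so every condition behind \eqref{eq:rstar-accuracy} survives under (A1) with $f(q_\tau)>0$ and nothing more.

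For the coverage claim I would invoke exact equivariance under the probability integral transform. With $U_i=F(X_i)$ one has $\tilde Y_i(s):=\mathbf{1}\{U_i\le s\}=Y_i(F^{-1}(s))$ wherever $F$ is strictly increasing, so the pivot, the directed tails, and the inverted endpoints all transform by $t\mapsto s=F(t)$, sending $[t_L,t_U]$ to $[F(t_L),F(t_U)]$ and $q_\tau$ to $\tau=F(q_\tau)$. For $n$ large the endpoints concentrate in $\mathcal{N}(q_\tau)$, where (A1) makes $F$ strictly monotone, so the coverage event $\{q_\tau\in[t_L,t_U]\}$ coincides with $\{\tau\in[F(t_L),F(t_U)]\}$; the latter is the coverage event of the same procedure applied to i.i.d.\ $\mathrm{Uniform}(0,1)$ data, a bounded reference model with unit density and all moments finite, for which Theorem \ref{thm:coverage} gives \eqref{eq:coverage-order}. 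The $O(n^{-1})$ order is thereby inherited for arbitrary heavy-tailed $F$, with $f(q_\tau)=F'(q_\tau)$ entering only as the local Jacobian of the transform and the interval scale, exactly the $f(q_\tau)^{-1}$ modulation already displayed in \eqref{eq:dpdt}. An equivalent direct route is to note that the coverage derivation of Theorem \ref{thm:coverage} linearizes $t\mapsto F(t)\mapsto p$ only at $q_\tau$ and uses (A1)--(A2), both of which are now secured without moments.

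I expect the one real obstacle to be bookkeeping the uniformity of the remainder over the family $\mathcal{F}$. The per-$F$ order is clean, but the supremum in \eqref{eq:coverage-order} demands that the implied constant --- modulated by $f(q_\tau)^{-1}$ and by the curvature of $z(t)$ in \eqref{eq:z-of-t} near $q_\tau$ --- be controlled uniformly, which is precisely where the hypothesis that $f$ is bounded and bounded away from zero on $\mathcal{N}(q_\tau)$ is consumed. I would therefore carry the saddlepoint/lattice remainder through the transform with explicit dependence on $\inf_{\mathcal{N}(q_\tau)}f$ and $\sup_{\mathcal{N}(q_\tau)}f$ only, and verify that the transformed problems share a common remainder bound in $(\tau,n)$ and these two local constants, confirming that no tail functional of $X$ re-enters through the back door.
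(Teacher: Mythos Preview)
Your proposal is correct and shares the paper's core observation: the two-point score makes the MGF/CGF globally finite and reduces everything to a binomial tail, so the conclusions of Theorems~\ref{thm:tail} and~\ref{thm:coverage} carry over with only the local condition (A1). The paper's own proof is a terse version of exactly this, simply noting that the bivariate $W_i$ is two-point (hence all CGF regularity is automatic) and that the coverage argument of Theorem~\ref{thm:coverage} linearizes $F$ only at $q_\tau$, so no global moment or tail-index input is ever consumed.

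Where you go further is in two respects. First, you explicitly walk through (A2)--(A4) and show each is forced by boundedness of the indicator score; the paper leaves this implicit. Second, for coverage you offer a probability-integral-transform equivariance argument---mapping the problem to i.i.d.\ $\mathrm{Uniform}(0,1)$ data and reading off \eqref{eq:coverage-order} there---which the paper does not use; it relies instead on the direct local linearization you also mention at the end of your third paragraph. The PIT route is a clean and slightly more conceptual way to see that only local information about $F$ near $q_\tau$ matters, and it makes the $f(q_\tau)^{-1}$ scale factor transparent as a Jacobian; the paper's direct route is shorter but less illuminating. Your final paragraph on uniformity over $\mathcal{F}$ is also more careful than the paper, which absorbs that issue into the uniformity already claimed in Theorem~\ref{thm:coverage}.
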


The results extend to extreme quantiles with $\tau=\tau_n$ provided the effective information $n\tau_n(1-\tau_n)$ diverges.

\begin{theorem}[Extreme quantiles]\label{thm:extreme}
Assume (A1)-(A3). Under the local regularity at $q_{\tau_n}$, the conclusions of Theorems \ref{thm:tail}-\ref{thm:coverage} remain valid uniformly over compact $x$-sets. If a vanishing ridge $Q_n\leftarrow Q_n+\epsilon_n$ is used with $\epsilon_n\to 0$ (for instance $\epsilon_n=c\,n^{-1/2}$), the orders are unchanged.
\end{theorem}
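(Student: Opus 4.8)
The plan is to route everything through the binomial reduction of Lemma~\ref{lem:bin-reduction}, so that the only new feature relative to Theorems~\ref{thm:tail}--\ref{thm:coverage} is that the Bernoulli success probability $\tau_n$ now drifts with $n$. First I would fix a compact $x$-set and track the location of $u_x=h^{-1}(x)$. Expanding $h$ about $u=\tau_n$ and using $d(\tau_n)=\tau_n(1-\tau_n)$ gives $u_x-\tau_n=O\bigl(x\sqrt{\tau_n(1-\tau_n)/n}\bigr)$ for bounded $x$; substituting into the quadratic expansions of the Kullback--Leibler divergence and of the logit shows that both $r$ and $q^{\pm}$ in \eqref{eq:binom-r-q} stay bounded, with $r=O(x)$ and $q^{\pm}=O(x)$, and that $u_x\in(0,1)$ remains interior. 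This confirms that the saddlepoint scalars are $O(1)$ uniformly on the compact $x$-set, which is precisely the regime in which the LR/$r^*$ expansion is sharpest.

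Next I would replace the fixed-$\tau$ saddlepoint statement used in Theorem~\ref{thm:tail} by its triangular-array counterpart, in which the remainder is expressed through the standardized cumulants of the binomial count rather than through $n$ directly. Writing $m_n=n\tau_n(1-\tau_n)$ for the effective information, the standardized skewness and excess kurtosis of $\mathrm{Bin}(n,\tau_n)$ are $O(m_n^{-1/2})$ and $O(m_n^{-1})$, while the integer lattice has span $1$ with relative mesh $m_n^{-1/2}\to 0$; the Daniels/Lugannani--Rice expansion with continuity (or mid-$p$) correction therefore carries relative error $O(m_n^{-3/2})$, uniformly over the bounded range of $u_x$. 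Under the hypothesis $m_n\to\infty$ this remainder vanishes, and reading $n^{-3/2}$ as $m_n^{-3/2}$ (these coincide when $\tau_n$ is bounded away from $0,1$) yields \eqref{eq:rstar-accuracy} and its signed-LR form \eqref{eq:lr-accuracy} at the drifting quantile. I expect this to be the main obstacle: the classical references state the expansion for fixed $p$, so the work is to make the remainder bound explicit in $m_n$ and uniform over the array, and in particular to verify that the lattice continuity correction remains valid as the relative mesh shrinks.

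For coverage I would reuse the linearization of Theorem~\ref{thm:coverage}, now centered at $q_{\tau_n}$. The sensitivity \eqref{eq:dpdt} becomes $-\phi(0)\sqrt{n}\,f(q_{\tau_n})/\sqrt{\tau_n(1-\tau_n)}$, so that inverting the $O(m_n^{-3/2})$ tail error locates each endpoint to within an $O(m_n^{-1})$ perturbation; summing the two directed tails gives the stated coverage order with $n^{-1}$ read as $m_n^{-1}$, uniform over the family $\mathcal{F}$ with $f(q_{\tau_n})$ bounded and bounded away from zero on a neighborhood of $q_{\tau_n}$. The local regularity (A1) at $q_{\tau_n}$ together with the sampling and mixing hypotheses (A3) are exactly what keep the empirical-process and self-normalization controls uniform along the shrinking neighborhood.

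Finally I would dispatch the ridge. Since $\psi_\tau^2\in\{(1-\tau_n)^2,\tau_n^2\}$ and $\bar Y_n\to\tau_n$, under $H_0$ one has $Q_n=m_n\{1+o_p(1)\}$, so adding $\epsilon_n$ perturbs the pivot by
\[
x_{\mathrm{obs}}^{(\mathrm{ridge})}-x_{\mathrm{obs}}=x_{\mathrm{obs}}\Bigl(\tfrac{1}{\sqrt{1+\epsilon_n/Q_n}}-1\Bigr)=O_p\bigl(\epsilon_n/m_n\bigr),
\]
which for $\epsilon_n=c\,n^{-1/2}$ equals $O_p\bigl(n^{-1/2}m_n^{-1}\bigr)=o_p\bigl(m_n^{-3/2}\bigr)$ whenever $m_n=o(n)$, and reduces to $O_p(n^{-3/2})$ in the fixed-$\tau$ case. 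Because $r,q^{\pm},r^*$ are Lipschitz in $x$ with $O(1)$ derivatives on the compact set, the induced change in $p_{\mathrm{dir}}$ is of the same negligible order, so the ridge leaves both the tail and the coverage orders intact. This completes the extension.
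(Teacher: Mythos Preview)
Your proposal is correct and follows essentially the same route as the paper: reduce to the one-parameter binomial via Lemma~\ref{lem:bin-reduction}, invoke the saddlepoint expansion for the drifting $\mathrm{Bin}(n,\tau_n)$ count under $m_n=n\tau_n(1-\tau_n)\to\infty$, linearize at $q_{\tau_n}$ for coverage, and bound the ridge perturbation by Taylor expansion. You are in fact more careful than the paper in making the triangular-array character explicit and in tracking rates through $m_n$ rather than $n$; the paper's proof simply asserts that the binomial expansions \eqref{eq:LR-BN-binom} hold uniformly on compact $r$-sets and that endpoint errors remain $O(n^{-1})$, whereas your identification of the remainder as $O(m_n^{-3/2})$ and of the ridge shift as $O_p(\epsilon_n/m_n)$ is the sharper and more honest bookkeeping for drifting $\tau_n$.
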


When the sample contains ties or the underlying $F$ assigns mass near $q_\tau$, directed tails can be evaluated using a mid-p modification or a Cornish-Fisher continuity shift. These corrections preserve the accuracy orders in Theorems \ref{thm:tail}-\ref{thm:coverage} because they only regularize local discontinuities without changing the binomial mapping in Lemma \ref{lem:bin-reduction}.

\section{Simulation study}\label{sec:sim}

This section outlines the simulation design for evaluating SNQESA against established confidence interval procedures for population quantiles. We begin by specifying the data-generating processes and computational details, then describe the benchmarking principles that guarantee equitable comparisons across methods, and conclude with definitions of the performance metrics appearing in subsequent tables.

Our simulation study draws independent samples $(X_1,\dots,X_n)$ from five distributions chosen to challenge different aspects of quantile inference: standard normal, lognormal, Student's $t$, standard Cauchy, beta, exponential and the mixture $0.5\mathcal{N}(-1,1)+0.5\mathcal{N}(1,1)$. For the normal, lognormal, and Student's $t$ distributions, the target quantile $q_\tau$ has known closed form. For the mixture distribution, we compute $q_\tau$ via robust root-finding with automatic bracketing, falling back to Monte Carlo approximation if the bracketing fails to identify a sign change. The primary configuration uses $n=100$ and $\tau=0.95$, with sensitivity analyses considering alternative values.

We implement SNQESA in two forms. The discrete version constructs equal-tailed and minimum-length intervals by inverting exact binomial tails for the count $K=\sum_{i=1}^n \mathbf{1}\{X_i\le t\}$. The continuous version evaluates directed tails of the self-normalized statistic using constrained empirical saddlepoint approximation with Barndorff-Nielsen's $r^*$ calibration. Near lattice boundaries, we stabilize computations with a continuity correction (Jeffreys by default) applied to the binomial tilt magnitude while preserving the uncorrected tilt direction. To prevent numerical singularities, we add a vanishing ridge $\epsilon_n = cn^{-1/2}$ to $Q_n$, maintaining higher-order accuracy. When the constrained solver nears degeneracy, we revert to the algebraically equivalent one-parameter binomial tilt via the monotone mapping between $T_n(t)$ and $\bar Y_n(t)$.

Competing methods include kernel-based, exact, bootstrap, subsampling, and linear-weight approaches, all at nominal level $1-\alpha=0.95$. Kernel-Wald intervals estimate $f(q_\tau)$ using default kernel density estimation at the sample quantile, with a positivity constraint to stabilize delta-method standard errors. Exact order-statistic inversion uses the binomial distribution for $K$, mapping rank bounds to sample order statistics and treating $n+1$ as an infinite upper endpoint. Percentile and BCa bootstrap methods employ $B=1000$ resamples with type 8 sample quantiles; BCa uses standard jackknife estimates for bias and acceleration. The smoothed bootstrap incorporates normal noise with rule-of-thumb bandwidth selection. Harrell-Davis intervals combine HD estimation with percentile bootstrap, while Maritz-Jarrett intervals use HD weights for variance approximation with normal referencing. Subsampling uses without-replacement samples of size $b\approx n^{0.7}$ and the pivot $2\hat{q}_\tau-\hat{q}_{\tau,b}$; $m$-out-of-$n$ bootstrap draws with-replacement samples of size $m\approx n^{0.7}$. Hall-Sheather (medians) and Nyblom (general $\tau$) intervals come from standard implementations. All methods process identical samples within replications to eliminate cross-method Monte Carlo variation.

Our evaluation adheres to three principles. First, intervals share common nominal levels and default tuning parameters. Second, all methods requiring point estimates use $\hat{q}_\tau=\mathrm{quantile}(X,\mathrm{type}=8)$, isolating interval construction effects from estimation differences. Third, resampling methods employ identical $B$ and shared random seeds within replications to minimize contrast variance.

Let $q_\tau$ be the true quantile. For each procedure and replication, we record coverage $\mathbf{1}\{L\le q_\tau\le U\}$, interval length $U-L$ (missing if infinite), failure indicators, and wall-clock time. Coverage standard errors use the binomial formula $\sqrt{\hat{p}(1-\hat{p})/R}$ with large, fixed $R$.

Two secondary metrics assess interval quality and point accuracy. The interval score
\[
\mathrm{IS}_\alpha(L,U;\theta)=(U-L)+\frac{2}{\alpha}(L-\theta)\mathbf{1}\{\theta<L\}+\frac{2}{\alpha}(\theta-U)\mathbf{1}\{\theta>U\},
\]
where $\theta=q_\tau$, penalizes both width and miscoverage (smaller is better). 

For each method, we report across replications: coverage with standard error, mean and median length, failure rate, mean time, and mean interval. All comparisons derive from common Monte Carlo samples, ensuring observed differences reflect methodological variations rather than simulation noise.

\begin{remark}
All simulations are implemented in \textbf{R}. Core functionality relies on base \textbf{R}, with \texttt{quantileCI} providing Hall--Sheather and Nyblom intervals, and \texttt{pbmcapply} enabling parallel replication on MacOS systems (serial execution on Windows). Random number generation is controlled via \texttt{set.seed} with derived seeds per replication to ensure reproducibility while maintaining alignment across methods. Computational timing uses \texttt{proc.time}. All implementation details follow the specifications outlined in the main text.
\end{remark}

\begingroup
\setlength{\tabcolsep}{4pt}
\setlength{\LTleft}{0pt}\setlength{\LTright}{0pt}
    \begin{longtable}[t]{lrrrrrrrrrr}
        \caption{Coverage and length across methods ($\mathcal N(0,1)$, $\tau=0.95$, $n=100$, $\alpha=0.05$).} 
        \label{tab:res_normal_95} \\
        \toprule
        Method & $Cov.$ & $se_C$ & $\bar{L}$ & $\tilde{L}$ & $\bar{t}$ & $\bar{b}$ & $\tilde{b}$ & $\mathrm{RMSE}(b)$ & $\overline{IS}$ & $\widetilde{IS}$ \\
        \midrule
        \endfirsthead
        \toprule
        Method & $Cov.$ & $se_C$ & $\bar{L}$ & $\tilde{L}$ & $\bar{t}$ & $\bar{b}$ & $\tilde{b}$ & $\mathrm{RMSE}(b)$ & $\overline{IS}$ & $\widetilde{IS}$ \\
        \midrule
        \endhead
        \rowcolor{highlightcolor}
        SNQESA       & 0.949 & 0.0151 & 0.6969 & 0.6696 & 0.0037 & -0.0507 & -0.0650 & 0.1886 & 0.8530 & 0.7037 \\
        WaldKDE      & 0.912 & 0.0127 & 0.7475 & 0.7323 & 0.0004 &  0.0038 & -0.0007 & 0.1992 & 1.0308 & 0.7637 \\
        HDBoot       & 0.946 & 0.0101 & 0.7522 & 0.7333 & 0.0505 &  0.0274 &  0.0198 & 0.1763 & 0.8973 & 0.7585 \\
        PctBoot      & 0.950 & 0.0097 & 0.8157 & 0.7933 & 0.0536 &  0.0200 &  0.0147 & 0.1812 & 0.9998 & 0.8190 \\
        BCa          & 0.952 & 0.0096 & 0.8263 & 0.7901 & 0.0569 &  0.0520 &  0.0515 & 0.1940 & 0.9983 & 0.8234 \\
        \addlinespace
        SmBoot       & 0.974 & 0.0071 & 0.8421 & 0.8190 & 0.0605 &  0.1078 &  0.1007 & 0.2022 & 0.9002 & 0.8275 \\
        HS\_Nyblom   & 0.960 & 0.0088 & 0.8880 & 0.8578 & 0.0046 &  0.0908 &  0.0919 & 0.2083 & 0.9879 & 0.8803 \\
        \rowcolor{highlightcolor}
        SNQESA$_{\text{min}}$ & 0.960 & 0.0088 & 0.8895 & 0.8577 & 0.0005 &  0.0568 &  0.0493 & 0.1986 & 1.0235 & 0.8819 \\
        MaritzJar    & 0.956 & 0.0092 & 0.8971 & 0.8594 & 0.0001 &  0.0238 &  0.0187 & 0.1869 & 1.0305 & 0.8800 \\
        \addlinespace
        Subsample    & 0.960 & 0.0088 & 1.2372 & 1.2227 & 0.0512 & -0.0124 & -0.0211 & 0.3085 & 1.4089 & 1.2327 \\
        \rowcolor{highlightcolor}
        SNQESA$_{\text{disc}}$ & 0.990 & 0.0044 & 1.2513 & 1.2303 & 0.0001 &  0.2265 &  0.2177 & 0.3209 & 1.2761 & 1.2303 \\
        Exact        & 0.990 & 0.0044 & 1.2513 & 1.2303 & 0.0001 &  0.2265 &  0.2177 & 0.3209 & 1.2761 & 1.2303 \\
        mOutOfn      & 0.994 & 0.0035 & 1.4144 & 1.4043 & 0.0509 &  0.0332 &  0.0288 & 0.1977 & 1.4232 & 1.4091 \\
        \bottomrule
    \end{longtable}
    \begin{flushleft}
        \footnotesize\emph{Notes.}
        $Cov.$ = coverage; $se_C$ = standard error of coverage; $\bar{L}$ / $\tilde{L}$ = mean/median interval length;
        $\bar{t}$ = mean time per replication (seconds); $\bar{b}$ / $\tilde{b}$ = mean/median center bias (interval center minus true $q_\tau$);
        $\mathrm{RMSE}(b)$ = RMSE of center bias; $\overline{IS}$ / $\widetilde{IS}$ = mean/median interval score.
    \end{flushleft}
\endgroup

\begingroup
\setlength{\tabcolsep}{4pt}
\setlength{\LTleft}{0pt}\setlength{\LTright}{0pt}
    \begin{longtable}[t]{lrrrrrrrrrr}
        \caption{Coverage and length across methods (lognormal, $\tau=0.95, n = 100, \alpha=0.05$).} \label{tab:res_lognormal_95} \\
        \toprule
        Method & $Cov.$ & $se_C$ & $\bar{L}$ & $\tilde{L}$ & $\bar{t}$ & $\bar{b}$ & $\tilde{b}$ & $\mathrm{RMSE}(b)$ & $\overline{IS}$ & $\widetilde{IS}$ \\
        \midrule
        \endfirsthead
        \toprule
        Method & $Cov.$ & $se_C$ & $\bar{L}$ & $\tilde{L}$ & $\bar{t}$ & $\bar{b}$ & $\tilde{b}$ & $\mathrm{RMSE}(b)$ & $\overline{IS}$ & $\widetilde{IS}$ \\
        \midrule
        \endhead
        WaldKDE      & 0.452 & 0.0223 & 1.2265 & 1.2211 & 0.0005 &  0.1371 &  0.0133 & 1.1090 & 16.5260 & 3.9022 \\
        \rowcolor{highlightcolor}
        SNQESA       & 0.948 & 0.0151 & 3.6928 & 3.3092 & 0.0041 &  0.1897 &  0.0126 & 1.1579 &  5.0452 & 3.7639 \\
        SmBoot       & 0.946 & 0.0101 & 4.6384 & 4.2682 & 0.0599 &  0.7582 &  0.5941 & 1.4815 &  5.5026 & 4.4563 \\
        HDBoot       & 0.946 & 0.0101 & 4.6960 & 4.2397 & 0.0515 &  0.9486 &  0.7487 & 1.6157 &  5.3614 & 4.3770 \\
        PctBoot      & 0.948 & 0.0099 & 4.7316 & 4.3443 & 0.0579 &  0.7529 &  0.5671 & 1.4900 &  5.6254 & 4.5737 \\
        \addlinespace
        BCa          & 0.952 & 0.0096 & 4.9337 & 4.4034 & 0.0576 &  0.9497 &  0.7586 & 1.6846 &  5.7994 & 4.5753 \\
        \rowcolor{highlightcolor}
        SNQESA$_{\text{min}}$ & 0.960 & 0.0088 & 5.3304 & 4.8681 & 0.0006 &  1.0267 &  0.8598 & 1.7666 &  6.0126 & 4.9922 \\
        MaritzJar    & 0.960 & 0.0088 & 5.5307 & 5.0084 & 0.0001 &  0.3895 &  0.2441 & 1.1590 &  6.1555 & 5.1781 \\
        HS\_Nyblom   & 0.966 & 0.0081 & 5.6363 & 5.1350 & 0.0050 &  1.3297 &  1.1206 & 2.0073 &  6.1081 & 5.1689 \\
        \addlinespace
        Subsample    & 0.924 & 0.0119 & 7.6573 & 6.9848 & 0.0517 & -1.2922 & -1.2650 & 2.3541 &  8.9299 & 7.3877 \\
        mOutOfn      & 0.994 & 0.0035 & 9.0206 & 8.1657 & 0.0528 &  2.0365 &  1.5974 & 3.0644 &  9.0621 & 8.1907 \\
        \rowcolor{highlightcolor}
        SNQESA$_{\text{disc}}$ & 0.990 & 0.0044 & 9.6177 & 8.3085 & 0.0001 &  3.1522 &  2.4550 & 4.2992 &  9.7532 & 8.3085 \\
        Exact        & 0.990 & 0.0044 & 9.6177 & 8.3085 & 0.0001 &  3.1522 &  2.4550 & 4.2992 &  9.7532 & 8.3085 \\
        \bottomrule
    \end{longtable}
    
    \begin{flushleft}
        \footnotesize\emph{Notes.}
        $Cov.$ = coverage; $se_C$ = standard error of coverage; $\bar{L}$ / $\tilde{L}$ = mean/median interval length;
        $\bar{t}$ = mean time per replication (seconds); $\bar{b}$ / $\tilde{b}$ = mean/median center bias (interval center minus true $q_\tau$);
        $\mathrm{RMSE}(b)$ = RMSE of center bias; $\overline{IS}$ / $\widetilde{IS}$ = mean/median interval score.
    \end{flushleft}
\endgroup

\begingroup
\setlength{\tabcolsep}{4pt}
\setlength{\LTleft}{0pt}\setlength{\LTright}{0pt}
    \begin{longtable}[t]{lrrrrrrrrrr}
        \caption{Coverage and length across methods ($t(2), \tau=0.95, n=100, \alpha=0.05$).} 
        \label{tab:res_t_95} \\
        \toprule
        Method & $Cov.$ & $se_C$ & $\bar{L}$ & $\tilde{L}$ & $\bar{t}$ & $\bar{b}$ & $\tilde{b}$ & $\mathrm{RMSE}(b)$ & $\overline{IS}$ & $\widetilde{IS}$\\
        \midrule
        \endfirsthead
        \toprule
        Method & $Cov.$ & $se_C$ & $\bar{L}$ & $\tilde{L}$ & $\bar{t}$ & $\bar{b}$ & $\tilde{b}$ & $\mathrm{RMSE}(b)$ & $\overline{IS}$ & $\widetilde{IS}$\\
        \midrule
        \endhead
        WaldKDE      & 0.618 & 0.0217 & 1.3382 & 1.3450 & 0.0005 &  0.1319 & -0.0438 & 0.8590 &  9.0513 & 1.5340\\
        \rowcolor{highlightcolor}
        SNQESA       & 0.954 & 0.0148 & 2.5814 & 2.2650 & 0.0041 &  0.1895 &  0.0791 & 0.9097 &  3.1274 & 2.1138\\
        SmBoot       & 0.958 & 0.0090 & 3.4388 & 2.9614 & 0.0601 &  0.7545 &  0.5496 & 1.3894 &  3.9113 & 3.0551\\
        PctBoot      & 0.944 & 0.0103 & 3.5361 & 3.0817 & 0.0572 &  0.7010 &  0.4696 & 1.3774 &  4.2611 & 3.2064\\
        BCa          & 0.942 & 0.0105 & 3.7549 & 3.1491 & 0.0596 &  0.8788 &  0.6259 & 1.6111 &  4.5221 & 3.2618\\
        \addlinespace
        HDBoot       & 0.944 & 0.0103 & 4.0841 & 3.2550 & 0.0544 &  1.1257 &  0.7128 & 2.4905 &  4.8627 & 3.3099\\
        \rowcolor{highlightcolor}
        SNQESA$_{\text{min}}$ & 0.952 & 0.0096 & 4.0998 & 3.4367 & 0.0006 &  0.9608 &  0.6333 & 1.7202 &  4.7765 & 3.5634\\
        HS\_Nyblom   & 0.944 & 0.0103 & 4.6754 & 3.7850 & 0.0096 &  1.3617 &  0.9349 & 2.3483 &  5.2991 & 3.9260\\
        MaritzJar    & 0.964 & 0.0083 & 4.6775 & 3.5809 & 0.0001 &  0.3461 &  0.1840 & 0.9409 &  5.0684 & 3.8353\\
        \addlinespace
        Subsample    & 0.924 & 0.0119 & 7.0561 & 5.2088 & 0.0543 & -1.7817 & -1.0627 & 4.9366 &  7.8043 & 5.6678\\
        mOutOfn      & 0.992 & 0.0040 & 9.4002 & 6.0010 & 0.0520 &  3.0852 &  1.4242 &10.8490 &  9.4564 & 6.0652\\
        \rowcolor{highlightcolor}
        SNQESA$_{\text{disc}}$ & 0.980 & 0.0063 &10.5304 & 6.5161 & 0.0001 &  4.1641 &  2.1699 &11.6343 & 10.7635 & 6.6070\\
        Exact        & 0.980 & 0.0063 &10.5304 & 6.5161 & 0.0001 &  4.1641 &  2.1699 &11.6343 & 10.7635 & 6.6070\\
        \bottomrule
    \end{longtable}
    
    \begin{flushleft}
    \footnotesize\emph{Notes.}
    $Cov.$ = coverage; $se_C$ = standard error of coverage; $\bar{L}$ / $\tilde{L}$ = mean/median interval length;
    $\bar{t}$ = mean time per replication (seconds); $\bar{b}$ / $\tilde{b}$ = mean/median center bias (interval center minus true $q_\tau$);
    $\mathrm{RMSE}(b)$ = RMSE of center bias; $\overline{IS}$ / $\widetilde{IS}$ = mean/median interval score.
    \end{flushleft}
\endgroup

\begingroup
\setlength{\tabcolsep}{4pt}
\setlength{\LTleft}{0pt}\setlength{\LTright}{0pt}
    \begin{longtable}[t]{lrrrrrrrrrr}
        \caption{Coverage and length across methods ($\mathrm{Cauchy}, \tau=0.5,\ n=50,\ \alpha=0.05$).} 
        \label{tab:tab:res_cauchy_50} \\
        \toprule
        Method & $Cov.$ & $se_C$ & $\bar{L}$ & $\tilde{L}$ & $\bar{t}$ & $\bar{b}$ & $\tilde{b}$ & $\mathrm{RMSE}(b)$ & $\overline{IS}$ & $\widetilde{IS}$\\
        \midrule
        \endfirsthead
        \toprule
        Method & $Cov.$ & $se_C$ & $\bar{L}$ & $\tilde{L}$ & $\bar{t}$ & $\bar{b}$ & $\tilde{b}$ & $\mathrm{RMSE}(b)$ & $\overline{IS}$ & $\widetilde{IS}$\\
        \midrule
        \endhead
        \rowcolor{highlightcolor}
        SNQESA       & 0.951 & 0.0068 & 1.7832 & 1.7365 & 0.0099 & -0.0293 & -0.0325 & 0.4551 & 2.0652 & 1.7758 \\
        HDBoot       & 0.956 & 0.0065 & 1.8026 & 1.7677 & 0.0400 & -0.0239 & -0.0367 & 0.4478 & 2.1218 & 1.8003 \\
        BCa          & 0.952 & 0.0068 & 1.8434 & 1.8201 & 0.0542 & -0.0865 & -0.0733 & 0.4689 & 2.1742 & 1.8454 \\
        PctBoot      & 0.961 & 0.0061 & 1.8536 & 1.8136 & 0.0513 & -0.0270 & -0.0306 & 0.4547 & 2.1440 & 1.8568 \\
        \rowcolor{highlightcolor}
        SNQESA$_{\text{min}}$ & 0.956 & 0.0065 & 1.8610 & 1.8339 & 0.0006 & -0.0293 & -0.0324 & 0.4591 & 2.1443 & 1.8565 \\
        \addlinespace
        HS\_Nyblom   & 0.965 & 0.0058 & 1.8846 & 1.8474 & 0.0028 & -0.0273 & -0.0313 & 0.4559 & 2.1402 & 1.8736 \\
        MaritzJar    & 0.969 & 0.0055 & 1.9296 & 1.8835 & 0.0001 & -0.0216 & -0.0174 & 0.4240 & 2.1192 & 1.9137 \\
        \rowcolor{highlightcolor}
        SNQESA$_{\text{disc}}$ & 0.976 & 0.0048 & 2.0921 & 2.0411 & 0.0001 & -0.0232 & -0.0217 & 0.4664 & 2.2554 & 2.0628 \\
        Exact        & 0.976 & 0.0048 & 2.0921 & 2.0411 & 0.0001 & -0.0232 & -0.0217 & 0.4664 & 2.2554 & 2.0628 \\
        \addlinespace
        SmBoot       & 0.986 & 0.0037 & 2.2249 & 2.1822 & 0.0561 & -0.0279 & -0.0318 & 0.4658 & 2.3118 & 2.1949 \\
        WaldKDE      & 0.984 & 0.0040 & 2.2384 & 2.1998 & 0.0004 & -0.0161 & -0.0218 & 0.4358 & 2.3292 & 2.2175 \\
        Subsample    & 0.978 & 0.0046 & 2.9079 & 2.8484 & 0.0517 & -0.0073 &  0.0108 & 0.5742 & 3.0444 & 2.8821 \\
        mOutOfn      & 1.000 & 0.0000 & 3.7128 & 3.6017 & 0.0510 & -0.0340 & -0.0201 & 0.5966 & 3.7128 & 3.6017 \\
        \bottomrule
    \end{longtable}

\begin{flushleft}
\footnotesize\emph{Notes.}
$Cov.$ = coverage; $se_C$ = standard error of coverage; $\bar{L}$ / $\tilde{L}$ = mean/median interval length;
$\bar{t}$ = mean time per replication (seconds); $\bar{b}$ / $\tilde{b}$ = mean/median center bias (interval center minus true $q_\tau$);
$\mathrm{RMSE}(b)$ = RMSE of center bias; $\overline{IS}$ / $\widetilde{IS}$ = mean/median interval score.
\end{flushleft}

\endgroup

\begingroup
\setlength{\tabcolsep}{4pt}
\setlength{\LTleft}{0pt}\setlength{\LTright}{0pt}
    \begin{longtable}[t]{lrrrrrrrrrr}
        \caption{Coverage and length across methods ($\mathrm{Cauchy}, \tau=0.95, n=100, \alpha=0.05$).} 
        \label{tab:res_cauchy_95} \\
        \toprule
        Method & $Cov.$ & $se_C$ & $\bar{L}$ & $\tilde{L}$ & $\bar{t}$ & $\bar{b}$ & $\tilde{b}$ & $\mathrm{RMSE}(b)$ & $\overline{IS}$ & $\widetilde{IS}$\\
        \midrule
        \endfirsthead
        \toprule
        Method & $Cov.$ & $se_C$ & $\bar{L}$ & $\tilde{L}$ & $\bar{t}$ & $\bar{b}$ & $\tilde{b}$ & $\mathrm{RMSE}(b)$ & $\overline{IS}$ & $\widetilde{IS}$\\
        \midrule
        \endhead
        WaldKDE      & 0.254 & 0.0195 &   3.3112 &   3.3236 & 0.0004 &   1.8095 &   0.0407 &   7.6848 &  142.52 & 70.537 \\
        \rowcolor{highlightcolor}
        SNQESA       & 0.951 & 0.0152 &   23.234 &   16.922 & 0.0049 &   5.0495 &   1.9573 &  14.0073 &  33.732 & 20.282 \\
        SmBoot       & 0.950 & 0.0097 &   41.002 &   26.285 & 0.0641 &   14.338 &   6.9981 &  27.1663 &  45.237 & 28.121 \\
        PctBoot      & 0.950 & 0.0097 &   41.327 &   26.551 & 0.0590 &   14.283 &   6.9146 &  27.1304 &  45.800 & 28.288 \\
        BCa          & 0.948 & 0.0099 &   45.723 &   28.523 & 0.0653 &   16.862 &   8.2889 &  31.5017 &  50.639 & 30.030 \\
        \addlinespace
        \rowcolor{highlightcolor}
        SNQESA$_{\text{min}}$ & 0.960 & 0.0088 &   51.337 &   32.214 & 0.0005 &   19.123 &   9.5784 &  35.7201 &  55.081 & 34.071 \\
        HS\_Nyblom   & 0.950 & 0.0097 &   104.74 &   40.255 & 0.0050 &   46.482 &   14.442 &  200.982 &  108.76 & 41.580 \\
        HDBoot       & 0.934 & 0.0111 &   125.27 &   35.898 & 0.0547 &   57.436 &   12.531 &  320.233 &  133.00 & 37.740 \\
        MaritzJar    & 0.978 & 0.0066 &   143.57 &   38.914 & 0.0001 &   6.7266 &   3.1146 &  17.5490 &  146.47 & 39.903 \\
        \addlinespace
        Subsample    & 0.870 & 0.0150 &   252.68 &   63.407 & 0.0564 &  -114.42 &  -20.654 &  714.421 &  262.86 & 69.066 \\
        mOutOfn      & 0.994 & 0.0035 &   329.61 &   78.395 & 0.0580 &   155.93 &   29.975 &  756.787 &  330.08 & 79.004 \\
        \rowcolor{highlightcolor}
        SNQESA$_{\text{disc}}$ & 0.980 & 0.0063 &   550.36 &   86.643 & 0.0001 &   268.61 &   36.594 &  1785.86 &  551.95 & 86.726 \\
        Exact        & 0.980 & 0.0063 &   550.36 &   86.643 & 0.0001 &   268.61 &   36.594 &  1785.86 &  551.95 & 86.726 \\
        \bottomrule
    \end{longtable}
    
    \begin{flushleft}
        \footnotesize\emph{Notes.}
        $Cov.$ = coverage; $se_C$ = standard error of coverage; $\bar{L}$ / $\tilde{L}$ = mean/median interval length;
        $\bar{t}$ = mean time per replication (seconds); $\bar{b}$ / $\tilde{b}$ = mean/median center bias (interval center minus true $q_\tau$);
        $\mathrm{RMSE}(b)$ = RMSE of center bias; $\overline{IS}$ / $\widetilde{IS}$ = mean/median interval score.
    \end{flushleft}
\endgroup

\begingroup
\setlength{\tabcolsep}{4pt}
\setlength{\LTleft}{0pt}\setlength{\LTright}{0pt}
    \begin{longtable}[t]{lrrrrrrrrrr}
        \caption{Coverage and length across methods ($0.5\mathcal N(-1, 1) + 0.5\mathcal N(1, 1), \tau=0.95, n=100, \alpha=0.05$).} 
        \label{tab:tab:res_mixnorm_95} \\
        \toprule
        Method & $Cov.$ & $se_C$ & $\bar{L}$ & $\tilde{L}$ & $\bar{t}$ & $\bar{b}$ & $\tilde{b}$ & $\mathrm{RMSE}(b)$ & $\overline{IS}$ & $\widetilde{IS}$\\
        \midrule
        \endfirsthead
        \toprule
        Method & $Cov.$ & $se_C$ & $\bar{L}$ & $\tilde{L}$ & $\bar{t}$ & $\bar{b}$ & $\tilde{b}$ & $\mathrm{RMSE}(b)$ & $\overline{IS}$ & $\widetilde{IS}$\\
        \midrule
        \endhead
        \rowcolor{highlightcolor}
        SNQESA       & 0.944 & 0.0155 & 0.8047 & 0.7750 & 0.0043 & -0.0700 & -0.0610 & 0.2269 & 1.5126 & 0.8679\\
        HDBoot       & 0.918 & 0.0123 & 0.8611 & 0.8392 & 0.0567 &  0.0195 &  0.0259 & 0.2145 & 1.1241 & 0.8859\\
        WaldKDE      & 0.914 & 0.0125 & 0.9081 & 0.8717 & 0.0005 &  0.0074 &  0.0056 & 0.2406 & 1.2208 & 0.9070\\
        PctBoot      & 0.928 & 0.0116 & 0.9307 & 0.8998 & 0.0624 &  0.0104 &  0.0197 & 0.2223 & 1.2256 & 0.9476\\
        BCa          & 0.916 & 0.0124 & 0.9375 & 0.9099 & 0.0646 &  0.0485 &  0.0654 & 0.2358 & 1.2366 & 0.9592\\
        \addlinespace
        HS\_Nyblom   & 0.932 & 0.0113 & 1.0034 & 0.9722 & 0.0078 &  0.0930 &  0.0971 & 0.2518 & 1.2104 & 1.0210\\
        \rowcolor{highlightcolor}
        SNQESA$_{\text{min}}$ & 0.938 & 0.0108 & 1.0124 & 0.9836 & 0.0009 &  0.0537 &  0.0603 & 0.2427 & 1.2550 & 1.0207\\
        MaritzJar    & 0.934 & 0.0111 & 1.0167 & 0.9624 & 0.0001 &  0.0279 &  0.0256 & 0.2256 & 1.2227 & 1.0238\\
        SmBoot       & 0.968 & 0.0079 & 1.0350 & 1.0111 & 0.0659 &  0.1643 &  0.1619 & 0.2671 & 1.1213 & 1.0174\\
        \addlinespace
        \rowcolor{highlightcolor}
        SNQESA$_{\text{disc}}$ & 0.980 & 0.0063 & 1.4288 & 1.3953 & 0.0003 &  0.2421 &  0.2455 & 0.3637 & 1.4680 & 1.4061\\
        Exact        & 0.980 & 0.0063 & 1.4288 & 1.3953 & 0.0001 &  0.2421 &  0.2455 & 0.3637 & 1.4680 & 1.4061\\
        Subsample    & 0.950 & 0.0097 & 1.4436 & 1.4259 & 0.0563 &  0.0235 &  0.0195 & 0.3693 & 1.6863 & 1.4434\\
        mOutOfn      & 0.988 & 0.0049 & 1.6642 & 1.6396 & 0.0568 & -0.0008 &  0.0005 & 0.2424 & 1.6967 & 1.6537\\
        \bottomrule
    \end{longtable}
    
    \begin{flushleft}
        \footnotesize\emph{Notes.}
        $Cov.$ = coverage; $se_C$ = standard error of coverage; $\bar{L}$ / $\tilde{L}$ = mean/median interval length;
        $\bar{t}$ = mean time per replication (seconds); $\bar{b}$ / $\tilde{b}$ = mean/median center bias (interval center minus true $q_\tau$);
        $\mathrm{RMSE}(b)$ = RMSE of center bias; $\overline{IS}$ / $\widetilde{IS}$ = mean/median interval score.
    \end{flushleft}
\endgroup

\begingroup
\setlength{\tabcolsep}{4pt}
\setlength{\LTleft}{0pt}\setlength{\LTright}{0pt}
    \begin{longtable}[t]{lrrrrrrrrrr}
        \caption{Coverage and length across methods ($\mathrm{Beta}, \tau=0.95, \ n=100, \alpha=0.05$).} 
        \label{tab:res_beta_95} \\
        \toprule
        Method & $Cov.$ & $se_C$ & $\bar{L}$ & $\tilde{L}$ & $\bar{t}$ & $\bar{b}$ & $\tilde{b}$ & $\mathrm{RMSE}(b)$ & $\overline{IS}$ & $\widetilde{IS}$\\
        \midrule
        \endfirsthead
        \toprule
        Method & $Cov.$ & $se_C$ & $\bar{L}$ & $\tilde{L}$ & $\bar{t}$ & $\bar{b}$ & $\tilde{b}$ & $\mathrm{RMSE}(b)$ & $\overline{IS}$ & $\widetilde{IS}$\\
        \midrule
        \endhead
        \rowcolor{highlightcolor}
        SNQESA       & 0.951 & 0.0106 & 0.0233 & 0.0208 & 0.0081 & -0.0083 & -0.0067 & 0.0121 & 0.0444 & 0.0224\\
        HS\_Nyblom   & 0.945 & 0.0072 & 0.0253 & 0.0227 & 0.0035 & -0.0078 & -0.0063 & 0.0114 & 0.0317 & 0.0235\\
        \rowcolor{highlightcolor}
        SNQESA$_{\text{min}}$ & 0.948 & 0.0070 & 0.0254 & 0.0228 & 0.0006 & -0.0074 & -0.0059 & 0.0111 & 0.0316 & 0.0235\\
        BCa          & 0.930 & 0.0081 & 0.0258 & 0.0233 & 0.0579 & -0.0084 & -0.0068 & 0.0119 & 0.0339 & 0.0239\\
        \addlinespace
        HDBoot       & 0.891 & 0.0099 & 0.0281 & 0.0253 & 0.0501 & -0.0109 & -0.0092 & 0.0141 & 0.0439 & 0.0258\\
        PctBoot      & 0.930 & 0.0081 & 0.0285 & 0.0256 & 0.0538 & -0.0100 & -0.0081 & 0.0135 & 0.0381 & 0.0261\\
        MaritzJar    & 0.970 & 0.0054 & 0.0304 & 0.0272 & 0.0001 & -0.0035 & -0.0018 & 0.0078 & 0.0337 & 0.0279\\
        \rowcolor{highlightcolor}
        SNQESA$_{\text{disc}}$ & 0.984 & 0.0040 & 0.0309 & 0.0277 & 0.0001 & -0.0098 & -0.0081 & 0.0131 & 0.0320 & 0.0278\\
        Exact        & 0.984 & 0.0040 & 0.0309 & 0.0277 & 0.0001 & -0.0098 & -0.0081 & 0.0131 & 0.0320 & 0.0278\\
        \addlinespace
        Subsample    & 0.767 & 0.0134 & 0.0639 & 0.0594 & 0.0540 & 0.0229 & 0.0222 & 0.0260 & 0.0859 & 0.0718\\
        mOutOfn      & 0.990 & 0.0031 & 0.0786 & 0.0738 & 0.0516 & -0.0339 & -0.0314 & 0.0372 & 0.0801 & 0.0740\\
        WaldKDE      & 1.000 & 0.0000 & 0.1061 & 0.1042 & 0.0004 & -0.0020 &  0.0000 & 0.0073 & 0.1061 & 0.1042\\
        SmBoot       & 0.553 & 0.0157 & 0.1528 & 0.1525 & 0.0582 &  0.0715 &  0.0727 & 0.0744 & 0.4234 & 0.1670\\
        \bottomrule
    \end{longtable}
    
    \begin{flushleft}
        \footnotesize\emph{Notes.}
        $Cov.$ = coverage; $se_C$ = standard error of coverage; $\bar{L}$ / $\tilde{L}$ = mean/median interval length;
        $\bar{t}$ = mean time per replication (seconds); $\bar{b}$ / $\tilde{b}$ = mean/median center bias (interval center minus true $q_\tau$);
        $\mathrm{RMSE}(b)$ = RMSE of center bias; $\overline{IS}$ / $\widetilde{IS}$ = mean/median interval score.
    \end{flushleft}
\endgroup

\begingroup
\setlength{\tabcolsep}{4pt}
\setlength{\LTleft}{0pt}\setlength{\LTright}{0pt}
    \begin{longtable}[t]{lrrrrrrrrrr}
        \caption{Coverage and length across methods ($\mathrm{Exp}, \tau=0.95, n=100, \alpha=0.05$).}
        \label{tab:res_exp_95} \\
        \toprule
        Method & $Cov.$ & $se_C$ & $\bar{L}$ & $\tilde{L}$ & $\bar{t}$ & $\bar{b}$ & $\tilde{b}$ & $\mathrm{RMSE}(b)$ & $\overline{IS}$ & $\widetilde{IS}$ \\
        \midrule
        \endfirsthead
        \toprule
        Method & $Cov.$ & $se_C$ & $\bar{L}$ & $\tilde{L}$ & $\bar{t}$ & $\bar{b}$ & $\tilde{b}$ & $\mathrm{RMSE}(b)$ & $\overline{IS}$ & $\widetilde{IS}$ \\
        \midrule
        \endhead
        WaldKDE      & 0.698 & 0.0145 & 0.9107 & 0.9007 & 0.0004 &  0.0776 &  0.0718 & 0.4528 & 3.9690 & 0.9899 \\
        \rowcolor{highlightcolor}
        SNQESA       & 0.947 & 0.0107 & 1.4026 & 1.3149 & 0.0089 &  0.0672 &  0.0416 & 0.4348 & 1.8106 & 1.4718 \\
        HDBoot       & 0.944 & 0.0073 & 1.6895 & 1.6107 & 0.0502 &  0.2221 &  0.2019 & 0.4877 & 2.0255 & 1.6768 \\
        SmBoot       & 0.964 & 0.0059 & 1.7623 & 1.6557 & 0.0594 &  0.2170 &  0.1832 & 0.4887 & 2.0046 & 1.7041 \\
        PctBoot      & 0.948 & 0.0070 & 1.8040 & 1.7008 & 0.0540 &  0.1890 &  0.1542 & 0.4896 & 2.1533 & 1.7674 \\
        \addlinespace
        BCa          & 0.942 & 0.0074 & 1.8467 & 1.7312 & 0.0569 &  0.2657 &  0.2295 & 0.5485 & 2.1901 & 1.7945 \\
        \rowcolor{highlightcolor}
        SNQESA$_{\text{min}}$ & 0.952 & 0.0068 & 1.9903 & 1.8731 & 0.0005 &  0.2754 &  0.2325 & 0.5639 & 2.2785 & 1.9506 \\
        HS\_Nyblom   & 0.956 & 0.0065 & 2.0197 & 1.9056 & 0.0035 &  0.3657 &  0.3312 & 0.6168 & 2.2935 & 1.9659 \\
        MaritzJar    & 0.958 & 0.0063 & 2.0291 & 1.8898 & 0.0001 &  0.1438 &  0.1223 & 0.4485 & 2.2597 & 1.9392 \\
        \addlinespace
        Subsample    & 0.938 & 0.0076 & 2.7289 & 2.5895 & 0.0519 & -0.1549 & -0.1587 & 0.7166 & 3.1690 & 2.7031 \\
        \rowcolor{highlightcolor}
        SNQESA$_{\text{disc}}$ & 0.980 & 0.0044 & 3.0250 & 2.7482 & 0.0001 &  0.7777 &  0.6832 & 1.0571 & 3.1085 & 2.7942 \\
        Exact        & 0.980 & 0.0044 & 3.0250 & 2.7482 & 0.0000 &  0.7777 &  0.6832 & 1.0571 & 3.1085 & 2.7942 \\
        mOutOfn      & 0.988 & 0.0034 & 3.1482 & 2.9348 & 0.0499 &  0.4082 &  0.3168 & 0.7462 & 3.2159 & 2.9756 \\
        \bottomrule
    \end{longtable}
    \begin{flushleft}
        \footnotesize\emph{Notes.}
        $Cov.$ = coverage; $se_C$ = standard error of coverage; $\bar{L}$ / $\tilde{L}$ = mean/median interval length;
        $\bar{t}$ = mean time per replication (seconds); $\bar{b}$ / $\tilde{b}$ = mean/median center bias (interval center minus true $q_\tau$);
        $\mathrm{RMSE}(b)$ = RMSE of center bias; $\overline{IS}$ / $\widetilde{IS}$ = mean/median interval score.
    \end{flushleft}
\endgroup
We first consider the Gaussian design. In Table~\ref{tab:res_normal_95}, our continuous SNQESA attains coverage 0.949, essentially on target, while delivering the shortest mean length among all well-calibrated procedures (0.697). The interval score is the lowest in the panel, and the center bias is small and slightly negative. Resampling competitors reach nominal coverage but are uniformly longer (for example BCa at 0.826 and Percentile at 0.816), whereas WaldKDE undercovers at 0.912. The discrete inversion variants, including Exact and SNQESA$_{\text{disc}}$, are conservative at 0.990 and come with substantially larger lengths around 1.25. Runtime-wise, SNQESA is orders of magnitude faster than bootstrap-based methods and remains comparable to lightweight closed-form baselines.

Under strong right-skewness (lognormal), Table~\ref{tab:res_lognormal_95} highlights a pronounced failure of WaldKDE (coverage 0.452). By contrast, SNQESA remains well calibrated (0.948) and attains shorter intervals than all other nominal competitors; it also achieves the best interval score in this panel. Percentile and BCa reach the nominal level but require visibly longer intervals, reflecting their difficulty in the upper tail. As in the Gaussian case, discrete inversions are accurate but very conservative in length.

Heavy symmetric tails are examined with the $t(2)$ design in Table~\ref{tab:res_t_95}. SNQESA maintains nominal coverage (0.954) with substantially shorter intervals than all resampling alternatives, while WaldKDE again undercovers (0.618). Maritz-Jarrett yields slightly higher coverage but only at the cost of much longer intervals and a worse interval score; Exact-type procedures remain the most conservative.

When focusing on the Cauchy median (Table~\ref{tab:tab:res_cauchy_50}), most robust methods achieve near-nominal coverage. Here SNQESA provides the shortest intervals among the well-calibrated group (mean length 1.783) and the best or near-best interval scores, with very small center bias. Discrete inversions are mildly conservative and longer, and m-out-of-n exhibits perfect but excessively conservative coverage with correspondingly large length.

The extreme-tail Cauchy experiment at $\tau=0.95$ (Table~\ref{tab:res_cauchy_95}) is the most challenging scenario. WaldKDE collapses with 0.254 coverage. SNQESA achieves nominal coverage (0.951) with dramatically shorter intervals than all other calibrated competitors (mean length about 23, versus 41-46 for bootstrap methods and over 100 for HS-Nyblom). Interval scores mirror this advantage. Maritz-Jarrett attains high coverage but only by inflating length to well beyond 100, and discrete inversions are accurate yet extremely wide.

For the bimodal mixture (Table~\ref{tab:tab:res_mixnorm_95}), nearly all methods exhibit mild undercoverage, a known difficulty when the density dips near the target quantile. SNQESA is slightly below nominal (0.944) while producing the shortest intervals in the well-performing group. Smoothed bootstrap attains higher coverage (0.968) with moderate lengths and competitive scores, illustrating that smoothing can help in multimodal settings; the price is a larger positive center bias. Discrete inversions again deliver excellent coverage but at a notable cost in length.

Bounded-support behavior is assessed with the Beta design in Table~\ref{tab:res_beta_95}. SNQESA is on target (0.951) and attains the shortest intervals among the nominal methods, with the best overall interval scores and small bias. Several bootstrap variants undercover here, and WaldKDE is perfectly calibrated only by being markedly conservative in length; smoothed bootstrap severely undercovers near the boundary.

Finally, with the exponential design (Table~\ref{tab:res_exp_95}), SNQESA stays near nominal (0.947) and again provides the shortest intervals among well-calibrated competitors, delivering favorable interval scores and modest bias. Bootstrap procedures reach the target but require longer intervals, while WaldKDE undercovers.

Across all designs, we observe a consistent pattern. First, continuous SNQESA tracks the nominal level closely in light-tailed, heavy-tailed, and skewed cases and does so with some of the shortest lengths and best interval scores. Second, purely discrete inversions and m-out-of-n procedures provide coverage insurance at the expense of substantially larger intervals. Third, WaldKDE is fast but unreliable under tail complexity and boundary effects, while resampling methods are slower and tend to be conservative in difficult tails. The only systematic tension for SNQESA appears under pronounced multimodality, where mild undercoverage can arise alongside very short intervals; in such cases, a more conservative tail split or ridge choice can trade a small increase in length for improved calibration. Overall, the results support SNQESA as a uniformly competitive default that balances calibration, efficiency, and speed across a wide spectrum of distributions and quantile levels.

\section{Empirical analysis}\label{sec:empirical}

This section evaluates the proposed SNQESA inference in a one-day market risk setting. Let $r_t$ denote daily log-returns on the S\&P 500 index, obtained from the Federal Reserve Bank of St.\ Louis (FRED). For each business day $t$, a one-step-ahead left-tail VaR forecast at level $\tau=0.99$ is computed from a rolling window of $m=250$ past observations,
\begin{equation*}\label{eq:def-var}
    \mathrm{VaR}_{t+1\mid t}^{(\tau)}=\inf\{v:\mathbb{P}(r_{t+1}\le v\mid\mathcal{F}_t)\ge \tau\}.
\end{equation*}
Define exceedance indicators
\begin{equation*}\label{eq:hit-indicator}
    I_{t+1}=\mathbf{1}\{r_{t+1}\le \mathrm{VaR}_{t+1\mid t}^{(\tau)}\},
\end{equation*}
with nominal exceedance probability $\pi=1-\tau=0.01$. For Historical Simulation (HS) and Filtered Historical Simulation (FHS), $(1-\alpha)$ confidence intervals for $\mathrm{VaR}_{t+1\mid t}^{(\tau)}$ are constructed via SNQESA at each $t$. As benchmarks we include Delta-Normal, Monte Carlo with Student-$t$ innovations (MCt), peaks-over-threshold extreme value methods (EVT), quantile regression (QR), and GARCH(1,1); their VaR intervals are obtained by bootstrap. All procedures use the same rolling window and return series.

Backtesting follows the likelihood-ratio (LR) framework. Unconditional coverage (Kupiec POF) tests $H_0:\mathbb{P}(I_{t+1}=1)=\pi$. Writing $N$ for the number of forecasts and $n$ for the number of exceedances\citep{Kupiec1995,Christoffersen1998}, $\hat\pi=n/N$, the LR statistic is
\begin{equation}\label{eq:LRpof}
    \mathrm{LR}_{\mathrm{POF}}=-2\Big[\log\{(1-\pi)^{N-n}\pi^{n}\}-\log\{(1-\hat\pi)^{N-n}\hat\pi^{n}\}\Big] \overset{H_0}{\sim} \chi^2_1.
\end{equation}
Independence (Christoffersen IND) tests first-order Markov clustering in the hit sequence. Let $n_{ij}$ be the number of transitions $I_t=i\to I_{t+1}=j$, $i,j\in\{0,1\}$, with $\hat p_{01}=n_{01}/(n_{00}+n_{01})$ and $\hat p_{11}=n_{11}/(n_{10}+n_{11})$. Under independence the common conditional exceedance probability equals $\hat p=\hat\pi$, and
\begin{equation}\label{eq:LRind}
    \mathrm{LR}_{\mathrm{IND}} = -2\Big[\log\{(1 - \hat p)^{n_{00} + n_{10}}\hat p^{n_{01} + n_{11}}\} - \log\{(1 - \hat p_{01})^{n_{00}}\hat p_{01}^{n_{01}}(1 - \hat p_{11})^{n_{10}}\hat p_{11}^{n_{11}}\}\Big] \sim \chi^2_1.
\end{equation}
Conditional coverage combines the two
\begin{equation}\label{eq:LRcc}
    \mathrm{LR}_{\mathrm{CC}} = \mathrm{LR}_{\mathrm{POF}}+\mathrm{LR}_{\mathrm{IND}} \sim \chi^2_2.
\end{equation}
For interpretability we also report the empirical exceedance rate $\hat\pi$ and the Basel traffic-light classification computed on non-overlapping windows of length $m=250$; for $\tau=0.99$ the green, yellow, and red zones correspond to at most $4$ exceedances, $5$-$9$ exceedances, and at least $10$ exceedances, respectively.

\begin{figure}[ht]
    \centering
    \includegraphics[width=0.9\linewidth]{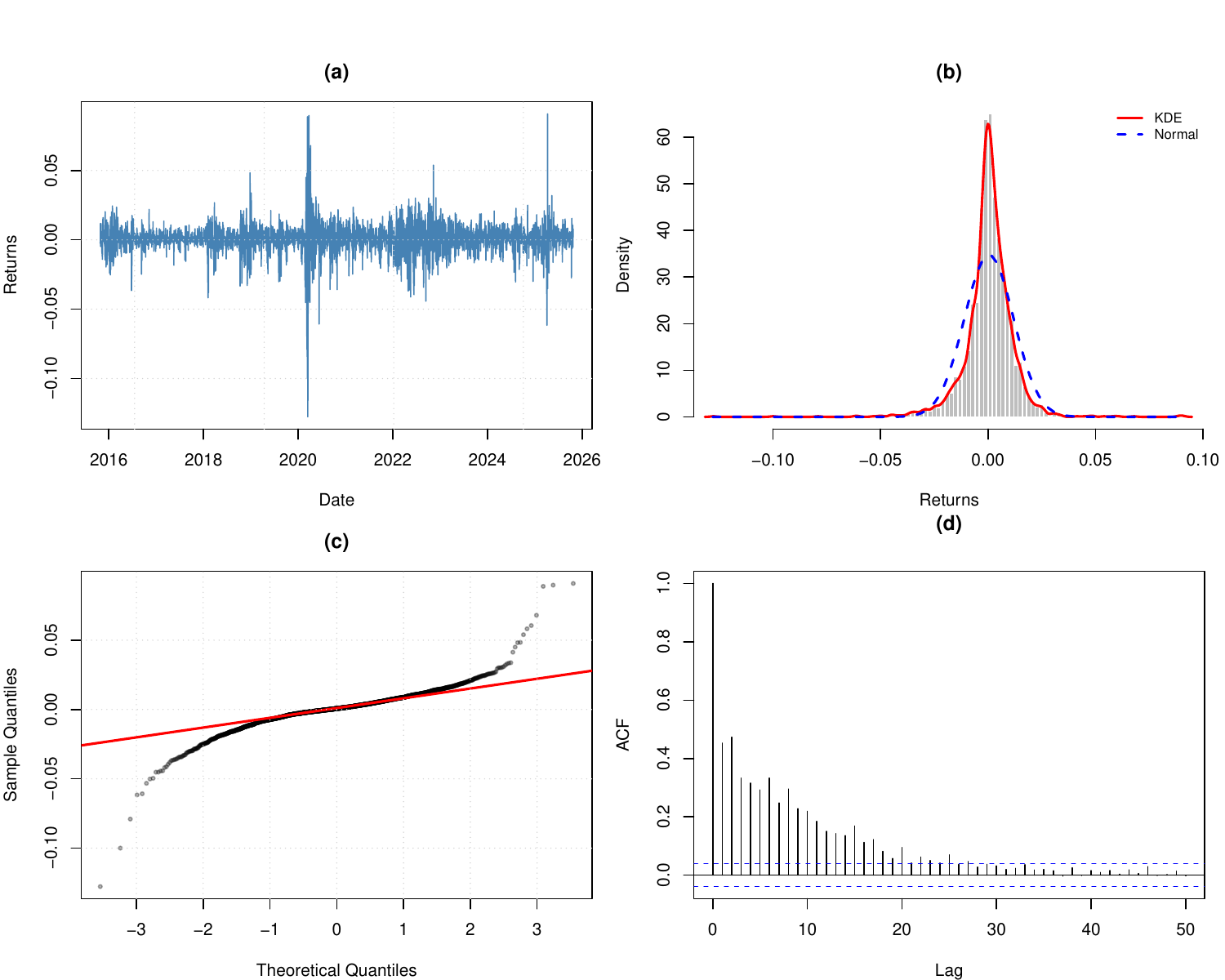}
    \caption{Daily log-returns (a) series; (b) histogram with KDE and normal; (c) normal Q-Q; (d) squared-returns ACF.}
    \label{fig:1}
\end{figure}

\begin{table}[ht]
    \centering
    \caption{Backtesting performance across VaR models}
    \label{tab:backtest_performance}
    \begin{threeparttable}
        \begin{tabular}{lrrrrrrrrr}
            \toprule
             Method           & PoF($p$-value)   & $\pi$(\%) & IND($p$-value) & CC($p$-value)   & green(\%) & yellow(\%) & red(\%) \\
            \midrule
\rowcolor{highlightcolor} FHS & 0.0061(0.9379) & 1.016     & 0(1)         & 0.0064(0.0996) & 89.72     & 10.27      & 0.00 \\
\rowcolor{highlightcolor} HS  & 5.8534(0.0155) & 1.546     & 0(1)         & 5.8645(0.0532) & 52.85     & 39.27      & 8.63 \\
                          MCt & 0.0822(0.7744) & 1.060     & 0(1)         & 0.0834(0.9591) & 79.94     & 20.06      & 0.00 \\
                          QR  & 4.2053(0.4029) & 1.458     & 0(1)         & 4.2146(0.1215) & 57.24     & 37.24      & 8.55 \\
                          EVT & 10.963(0.0929) & 1.767     & 0(1)         & 10.979(0.1302) & 50.19     & 41.16      & 8.63 \\
                        GARCH & 30.073(0.0416) & 2.349     & 0(1)         & 30.100(0.2909) & 30.37     & 64.79      & 4.87 \\
                        EWHS  & 21.710(0.0000) & 0.021     & 0(1)         & 21.733(0.0001) & 40.39     & 51.21      & 8.39 \\
            \bottomrule
        \end{tabular}
            \begin{tablenotes}
            \footnotesize
            \item \textit{Notes:} $\pi=1-\tau$. PoF, IND, CC denote the LR statistics and $p$-values in \eqref{eq:LRpof}, \eqref{eq:LRind}, and \eqref{eq:LRcc}.
            \end{tablenotes}
    \end{threeparttable}
\end{table}

Table \ref{tab:backtest_performance} indicates that FHS combined with SNQESA attains an empirical exceedance rate $\hat\pi=1.016\%$ close to the nominal level, with a very high POF $p$-value ($0.9379$). In contrast, HS exhibits a statistically significant deviation (POF $p$-value $0.0155$, $\hat\pi=1.546\%$). All models show no detectable clustering in the hit process at the daily frequency (IND $p$-values near $1$). Conditional coverage follows the same pattern, with FHS acceptable and HS borderline at the $5\%$ level. The Basel classification places FHS in the green region in the vast majority of rolling years, with no red periods, whereas HS, QR, and EVT have non-negligible red shares.

\begin{figure}[ht]
    \centering
    \includegraphics[width=0.8\linewidth]{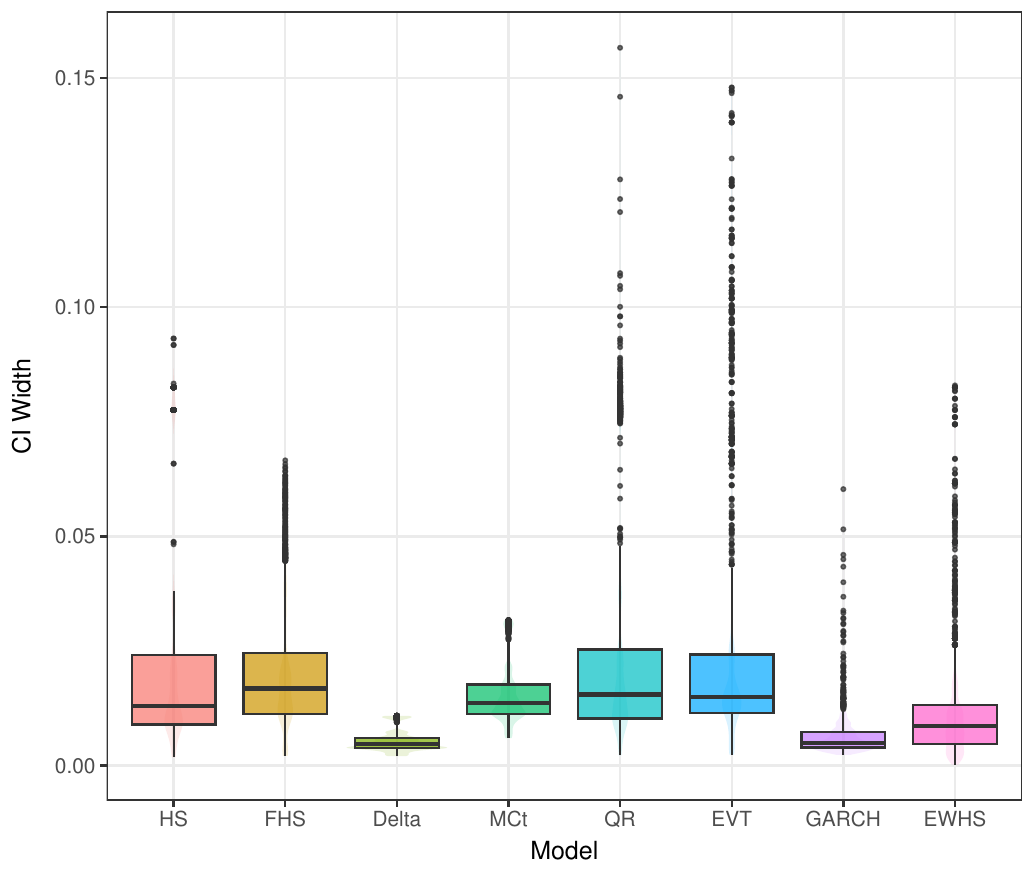}
    \caption{Distribution of VaR interval widths across models.}
    \label{fig:ciwidth}
\end{figure}

Figure \ref{fig:ciwidth} complements Tables \ref{tab:ci_comparison} and \ref{tab:backtest_performance}. Models reporting the narrowest intervals—Delta-Normal and GARCH—also display the largest deviations from nominal coverage (Table \ref{tab:backtest_performance}), indicating over-confidence in the left tail. In contrast, FHS delivers moderate interval widths (Table \ref{tab:ci_comparison}) together with near-nominal unconditional coverage (POF = 0.94), suggesting that its uncertainty quantification is commensurate with forecast risk. MCt and QR/EVT exhibit wider and more dispersed widths, consistent with their heavier-tailed innovations or threshold uncertainty. The EWHS lies between HS and FHS, its slightly wider distribution relative to HS mirrors stronger responsiveness to scale shifts.

\begin{table}[H]
    \centering
    \caption{Confidence-interval characteristics for VaR models}
    \label{tab:ci_comparison}
    \begin{threeparttable}
        \begin{tabular}{lrrrrrrr}
            \toprule
                          Method            & $\bar{W}$ & $se_W$ & $\tilde W$ & $W_{\min}$ & $W_{\max}$ & $\bar t$ (s) \\
            \midrule
\rowcolor{highlightcolor} FHS (SNQESA)      & 0.02023 & 0.01388 & 0.01684 & 0.00214 & 0.06656 & 1.15 \\
\rowcolor{highlightcolor} HS (SNQESA)       & 0.02186 & 0.02190 & 0.01301 & 0.00180 & 0.09314 & 1.12 \\
                          MCt (Bootstrap)   & 0.02468 & 0.02625 & 0.01291 & 0.00551 & 0.03341 & 65.31 \\
                          QR (Bootstrap)    & 0.02405 & 0.02250 & 0.01553 & 0.00226 & 0.14794 & 200.96 \\
                          EVT (Bootstrap)   & 0.02252 & 0.02354 & 0.01426 & 0.00210 & 0.13279 & 86.39 \\
                          GARCH (Bootstrap) & 0.00578 & 0.00393 & 0.00468 & 0.00208 & 0.06081 & 814.3 \\
                          EWHS (Bootstrap)  & 0.01191 & 0.01266 & 0.00864 & 0.00021 & 0.08292 & 95.712 \\
            \bottomrule
        \end{tabular}
        \footnotesize
        \begin{tablenotes}
            \item \textit{Notes:} HS and FHS intervals are constructed by SNQESA; others by bootstrap. $W$(Width) metrics correspond to nominal $95\%$ VaR intervals.
        \end{tablenotes}
    \end{threeparttable}
\end{table}

Table \ref{tab:ci_comparison} compares interval width and computational time. SNQESA produces shorter average intervals for HS and FHS than the bootstrap-based benchmarks (MCt, QR, EVT) at a fraction of the computational cost. Although the GARCH intervals are the narrowest, the unconditional coverage results in Table \ref{tab:backtest_performance} caution against interpreting small absolute widths as genuine efficiency when model uncertainty may be underestimated.

\begin{table}[H]
    \centering
    \caption{Stability metrics for VaR forecast paths}
    \label{tab:stability_assessment}
    \begin{threeparttable}
        \begin{tabular}{lrrrrr}
            \toprule
            Method & VaR Volatility & Change Volatility & Max Drawdown & Turning Ratio & Stability \\
            \midrule
            \rowcolor{highlightcolor}
            FHS   & 0.02681 & 0.00172 & 0.17522 & 0.37593 & 25.43 \\
            \rowcolor{highlightcolor}
            HS    & 0.01868 & 0.00090 & 0.06735 & 0.06191 & 140.95 \\
            MCt   & 0.01537 & 0.00041 & 0.05108 & 0.42857 & 23.10 \\
            QR    & 0.01898 & 0.00162 & 0.09316 & 0.45997 & 21.00 \\
            EVT   & 0.01747 & 0.00070 & 0.06941 & 0.61565 & 16.05 \\
            GARCH & 0.01556 & 0.00466 & 0.21816 & 0.47914 & 19.01 \\
            EWHS  & 0.01685 & 0.00165 & 0.10949 & 0.05039 & 149.46 \\
            \bottomrule
        \end{tabular}
    \begin{tablenotes}
        \footnotesize
        \item \textit{Notes:} VaR Volatility is the standard deviation of $\{\mathrm{VaR}^{(\tau)}_t\}$; Change Volatility is the standard deviation of $\Delta\mathrm{VaR}_t=\mathrm{VaR}^{(\tau)}_{t}-\mathrm{VaR}^{(\tau)}_{t-1}$; Turning Ratio is the fraction of sign changes in $\Delta\mathrm{VaR}_t$; Stability is $1/(\text{Change Volatility}+0.1\times\text{Turning Ratio})$.
    \end{tablenotes}
    \end{threeparttable}
\end{table}

\begin{figure}[ht]
    \centering
    \includegraphics[width=0.85\linewidth]{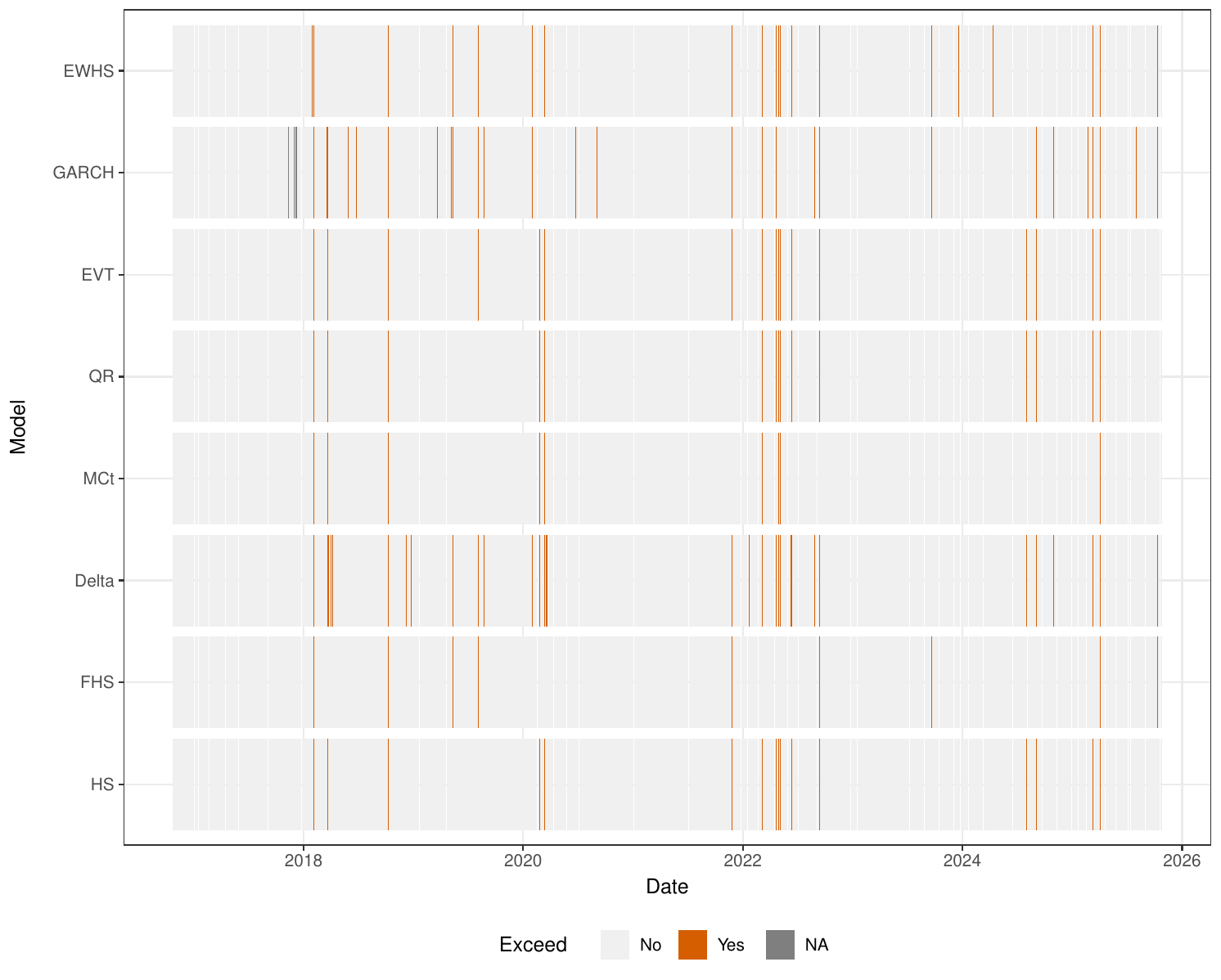}
    \caption{Exceedance heatmap ($I_{t+1} = \mathbf{1}\{\,r_{t+1}\le \mathrm{VaR}^{0.99}_{t+1|t}\,\}$)}
    \label{fig:exceed_heatmap}
\end{figure}

Figure~\ref{fig:exceed_heatmap} visualizes the timing of VaR violations across models. Violations cluster around the 2020 pandemic crash and the 2022 tightening episode, indicating regime shifts rather than idiosyncratic noise. Models differ markedly in how often they breach during these stress windows: Delta-Normal and GARCH display the densest stripes, whereas FHS and MCt are comparatively sparse. This visual ranking matches the unconditional coverage results in Table~\ref{tab:backtest_performance}: FHS attains an empirical exceedance rate close to the nominal 1\% ($\hat \pi=1.016$\%), while Delta, QR, EVT and especially GARCH overshoot the nominal level. EWHS (exponentially-weighted HS) behaves similarly to FHS but reacts more to turning points, which is consistent with its recency weights.

The heatmap also aligns with our interval diagnostics. Models with the densest violation stripes (Delta, GARCH) are those reporting the narrowest VaR intervals in Table~\ref{tab:ci_comparison}, suggesting underestimation of tail uncertainty. Conversely, FHS and MCt combine moderate interval width with fewer breaches. Conditional on a breach, the lollipop plots of exceedance severity show higher tails for Delta/GARCH, reinforcing the caution that narrower intervals do not necessarily imply better efficiency when model risk is material.

Table \ref{tab:stability_assessment} assesses temporal regularity of the forecast sequences. FHS exhibits higher change volatility than HS, indicating greater responsiveness to shifts in conditional scale, while its turning ratio and drawdown do not suggest excessive oscillation. HS displays very low change volatility but, combined with its unconditional coverage deviation, this points to sluggish adjustment to distributional changes.

\begin{table}[htbp]
    \centering
    \small
    \caption{Extreme-event performance (best VaR models, 2018--2023)}
    \label{tab:extreme_events}
    \begin{tabular}{lcccccc}
        \toprule
        Period & Method & Fail & Gap & Ratio & Score & K \\
        \midrule
        2018 Q4 Drawdown        & QR  & 0.1 & $-0.0081$ & 0.983 & 0.925 & 10 \\
        \rowcolor{highlightcolor}
        2018 Volmageddon        & FHS & 0.2 & $-0.0336$ & 0.937 & 0.841 & 10 \\
        \rowcolor{highlightcolor}
        2020 COVID Crash        & FHS & 0.2 & $-0.0557$ & 0.983 & 0.855 & 10 \\
        2020 US Election Vol.   & HS  & 0.0 & $-0.0837$ & 1.000 & 1.000 & 10 \\
        \rowcolor{highlightcolor}
        2022 Fed Tightening Bear& FHS & 0.1 & $-0.0057$ & 0.979 & 0.924 & 10 \\
        2023 US Regional Banks  & HS  & 0.0 & $-0.0333$ & 1.000 & 1.000 & 10 \\
        \bottomrule
    \end{tabular}
    \footnotesize
    \begin{tablenotes}
        \item \textit{Notes:} Fail is the extreme-period exceedance rate; Gap is $\mathbb{E}[\mathrm{Loss}-\mathrm{VaR}\mid \mathrm{Loss}>\mathrm{VaR}]$; Ratio is $\mathbb{E}[\min(\mathrm{VaR}/\mathrm{Loss},1)]$; Score is normalized to $[0,1]$; $K$ is the number of top losses used.
    \end{tablenotes}
\end{table}

Table \ref{tab:extreme_events} focuses on selected stress episodes over 2018-2023. Performance is regime dependent: QR is strongest in the prolonged drawdown of 2018Q4, whereas FHS dominates in volatility-clustering environments and during abrupt left-tail shocks (February 2018, March 2020), as reflected by both failure rates and gap magnitudes. Event-driven short bursts (election week, 2023 regional banking stress) are adequately handled by HS. These contrasts accord with the mechanism of FHS, which couples nonparametric quantiles with a volatility filter and thus adapts to persistent changes in conditional scale while retaining distributional flexibility.

Overall, the evidence supports two points. First, for nonparametric HS-type models, SNQESA provides a statistically accurate and computationally efficient alternative to bootstrap for constructing VaR intervals: it attains comparable or smaller interval widths at a fraction of the cost, with endpoint error of order $n^{-1}$. Second, when combined with a volatility filter (FHS), the approach yields VaR paths with accurate unconditional coverage and stable dynamics across market regimes, including pronounced turbulence, which is desirable for regulatory backtesting and internal risk monitoring.

\section{Discussion}\label{sec:discussion}

SNQESA combines a self-normalized pivot with a constrained empirical saddlepoint approximation, thereby avoiding estimation of $f(q_\tau)$ while aligning tail calibration with the LR/$r^*$ benchmark. In Monte Carlo experiments spanning light-tailed, heavy-tailed and contaminated mixtures, the continuous equal-tailed inversion maintains coverage close to nominal with competitive lengths and no user tuning. The discrete/exact inversion and the minimum-length variant recover nominal coverage in difficult regimes (extreme $\tau$ or coarse rounding) at a predictable increase in width, and the interval-score summaries are consistent with these findings.

Our experience supports the continuous equal-tailed procedure as the default, augmented with a mid-$p$ adjustment and a small ridge in the denominator of the self-normalizer. For extreme quantiles or highly discrete outcomes, \texttt{SNQESA\_disc} or \texttt{SNQESA\_min)} is preferred; the latter is particularly effective when the score distribution is asymmetric. Reporting both endpoints and interval scores helps reveal when improved coverage is achieved primarily through extra width.

The method requires only sorting and solving a low-dimensional tilted system per endpoint. In rolling windows typical for risk management, the runtime is modest and does not rely on resampling loops or bandwidth searches, which makes uncertainty quantification reproducible and easy to audit.

Coupling SNQESA with historical or filtered historical simulation yields VaR bands with accurate unconditional coverage at moderate computational cost. These bands measure estimation uncertainty of the quantile forecast without committing to a parametric volatility model. Narrower bands produced by parametric specifications do not systematically improve POF/IND/CC backtests in our experiments; by contrast, HS/FHS with SNQESA provides stable traffic-light classifications and clearer diagnostics.

Our theory assumes i.i.d.\ sampling and local smoothness of $F$ near $q_\tau$. Coverage may deteriorate if the self-normalizer becomes too small (e.g., near a flat empirical step) or when mass accumulates at $q_\tau$ due to ties/rounding. The ridge safeguards the first case; mid-$p$ and exact inversion address the second. With very small windows and extreme $\tau$, information is intrinsically limited for any method; in such settings we recommend \texttt{SNQESA\_disc} with conservative tail splitting and to report effective exceedance counts.

Extensions we view as feasible. The rank-reduced curvature of the tilted system and the self-normalization suggest that several directions are attainable with limited changes: 
(i) two-sample comparisons and pooled designs by replacing the scalar score with the relevant influence function; 
(ii) regression and panel frameworks (fixed-, random-, and mixed-effects) by plugging in debiased scores and retaining the same constrained tilting step; (iii) weakly dependent data via block/self-normalization of score sums under mixing or martingale-difference assumptions. We expect these extensions to preserve the numerical stability and LR/$r^*$-level calibration that make SNQESA effective in finite samples.

\section*{Disclosure Statement}
The authors report there are no competing interests to declare.

\section*{Funding}
This work was supported by the [Funding Agency] under Grant [number xxxx].

\begin{appendices}
\numberwithin{equation}{section}

\section{Numerical Details}\label{app:num}

This appendix records the full constrained ESA computation, including a closed-form Newton step under rank-1 geometry, stable evaluation of cumulants and curvature, line-search and trust-region safeguards, a vanishing ridge near the boundary, continuity corrections, monotone inversion for confidence limits, as well as complexity and error-propagation calculations. All objects are expressed in the unit-scale CGF $K(\lambda)$ and its derivatives, with the sum-scale recovered by multiplication by $n$ where needed.

For a fixed threshold $t\in\mathbb{R}$ write
\begin{equation*}
    x_{\mathrm{obs}}(t)=\frac{S_n(t)}{\sqrt{Q_n(t)}},\quad 
    S_n(t)=\sum_{i=1}^n\{\tau-\mathbf{1}(X_i\le t)\},\quad 
    Q_n(t)=\sum_{i=1}^n\{\tau-\mathbf{1}(X_i\le t)\}^2,
\end{equation*}
and impose the boundary in the observed direction
\begin{equation*}
    g(s,q)=s-x_{\mathrm{obs}}\sqrt{q}.
\end{equation*}
The two-point unit CGF for $W_i=(\psi_\tau,\psi_\tau^2)^\top$ is
\begin{equation*}
    M(\lambda)=\tau e^{\lambda_1(\tau-1)+\lambda_2(1-\tau)^2}+(1-\tau)e^{\lambda_1\tau+\lambda_2\tau^2},\quad K(\lambda)=\log M(\lambda),
\end{equation*}
with tilted success probability
\begin{equation*}
    p_\lambda=\frac{\tau e^{\lambda_1(\tau-1)+\lambda_2(1-\tau)^2}}{\tau e^{\lambda_1(\tau-1)+\lambda_2(1-\tau)^2}+(1-\tau)e^{\lambda_1\tau+\lambda_2\tau^2}}\in(0,1),
\end{equation*}
and rank-1 derivatives
\begin{equation*}
    \mu(\lambda)=\nabla K(\lambda)=\big(\tau-p_\lambda, \tau^2+p_\lambda(1-2\tau)\big)^\top, \quad
    \Sigma(\lambda)=\nabla^2K(\lambda)=p_\lambda(1-p_\lambda)vv^\top, \quad
    v=(-1,1-2\tau)^\top.
\end{equation*}
On sum-scale $J(\lambda)=n\Sigma(\lambda)=np_\lambda(1-p_\lambda)v v^\top$. The constrained saddlepoint $(\hat\lambda,\hat\eta,\hat\mu)$ solves
\begin{equation}\label{eq:app-F}
    F(\lambda,\eta):=
    \begin{bmatrix}
        g\big(n\mu(\lambda)\big)\\[2pt]
        \lambda-\eta\nabla g\big(n\mu(\lambda)\big)
    \end{bmatrix}
    =
    \begin{bmatrix}
        0\\[2pt]
        \mathbf{0}
    \end{bmatrix},\quad 
    \nabla g(\mu)=\left(1,\ -\frac{x_{\mathrm{obs}}}{2\sqrt{\mu_2}}\right)^\top,\quad
    H_g(\mu)=\begin{pmatrix}0&0\\[2pt]0&\displaystyle\frac{x_{\mathrm{obs}}}{4\mu_2^{3/2}}\end{pmatrix}.
\end{equation}
A damped Newton step $(\delta\lambda,\delta\eta)$ solves $D F(\lambda,\eta)[\delta\lambda,\delta\eta]=-F(\lambda,\eta)$ with Jacobian
\begin{equation*}
    D F(\lambda,\eta)=
    \begin{bmatrix}
        \nabla g(n\mu(\lambda))^\top J(\lambda) & 0\\[2pt]
        I_2 - \eta H_g(n\mu(\lambda))J(\lambda) & -\nabla g(n\mu(\lambda))
    \end{bmatrix}.
\end{equation*}
Because $J(\lambda)$ is rank-1, the Newton system has a closed-form solution by projecting along $v$. Let $\kappa(\lambda)=np_\lambda(1-p_\lambda)$ so that $J(\lambda)=\kappa vv^\top$, and define
\begin{equation*}
    \gamma(\lambda)=\nabla g(n\mu(\lambda))\cdot v,\quad 
    \zeta(\lambda)=v^\top H_g(n\mu(\lambda)) v=\frac{x_{\mathrm{obs}}}{4}\cdot\frac{(1-2\tau)^2}{\big(n\mu_2(\lambda)\big)^{3/2}}.
\end{equation*}
With residuals $r_0=g(n\mu(\lambda))$ and $r_1=\lambda-\eta\nabla g(n\mu(\lambda))$, the first Newton equation gives
\begin{equation}\label{eq:app-alpha}
    v^\top\delta\lambda=-\frac{r_0}{\kappa\gamma},\quad 
    \alpha:=\frac{v^\top\delta\lambda}{\|v\|^2}=-\frac{r_0}{\kappa\gamma\|v\|^2},\quad 
    \delta\lambda=\alpha v+z,\ z\perp v.
\end{equation}
The second equation reduces to
\begin{equation}\label{eq:app-second-row}
    \alpha v+z-\eta\kappa\alpha\|v\|^2 H_g v-\nabla g\delta\eta=-r_1.
\end{equation}
Left-multiplying \eqref{eq:app-second-row} by $\nabla g^\top$ yields
\begin{equation}\label{eq:app-deta}
    -\|\nabla g\|^2\delta\eta+\alpha\gamma+\nabla g\cdot z-\eta\kappa\alpha\|v\|^2\nabla g\cdot(H_g v)=-\nabla g\cdot r_1.
\end{equation}
Setting $z\equiv 0$ gives the explicit update
\begin{equation}\label{eq:app-deta-explicit}
    \delta\eta=\frac{\nabla g\cdot r_1-\alpha\gamma+\eta\kappa\alpha\|v\|^2\nabla g\cdot(H_g v)}{\|\nabla g\|^2},\quad  \delta\lambda=\alpha v.
\end{equation}
If one prefers to remove residual components orthogonal to $\operatorname{span}\{v,\nabla g\}$, a correction $z$ may be computed from \eqref{eq:app-second-row} as
\begin{equation}\label{eq:app-z}
    z=-r_1-\alpha v+\eta\kappa\alpha \|v\|^2 H_g v+\nabla g\delta\eta,\quad 
    z\leftarrow z-\frac{v^\top z}{\|v\|^2}v,
\end{equation}
which enforces $z\perp v$ and preserves stability even when $H_g$ is large ($\mu_2\downarrow 0$).

The raw Newton increment is damped by a backtracking line-search with factor $\rho\in(0,1)$ (we use $\rho=1/2$), decreasing the merit
\begin{equation}\label{eq:app-merit}
    \Psi(\lambda,\eta)=\frac{1}{2}r_0^2+\frac{1}{2}\|r_1\|^2,
\end{equation}
until $\Psi$ decreases (Armijo condition suffices). A maximal damping factor of 10 avoids excessive shortening. To keep the tilt interior we enforce
\begin{equation*}
    p_\lambda\in[\varepsilon,1-\varepsilon],\quad \varepsilon=10^{-10},
\end{equation*}
either by a logistic reparameterization $\beta=\logit(p_\lambda)$ or by rejecting steps that violate the interval; both are equivalent since $\nabla p_\lambda=p_\lambda(1-p_\lambda)v$ is collinear with $v$. We also bound $|\hat\eta|$ via a trust-region on $(\lambda,\eta)$ to prevent overflow when $\nabla g(\hat\mu)\cdot v$ is near zero.

Near extreme quantiles or very small $n$ we stabilize $x_{\mathrm{obs}}$ and $\nabla g$ by a vanishing ridge
\begin{equation}\label{eq:app-ridge}
    Q_n(t)\leftarrow Q_n(t)+\epsilon_n,\quad \epsilon_n\to 0,
\end{equation}
and in practice set $\epsilon_n=cn^{-1/2}$ with a small constant $c$. A first-order Taylor expansion gives
\begin{equation*}
    x_{\mathrm{obs}}^{(\mathrm{ridge})}-x_{\mathrm{obs}}
    =\frac{S_n}{\sqrt{Q_n+\epsilon_n}}-\frac{S_n}{\sqrt{Q_n}}
    =-\frac{S_n\epsilon_n}{2Q_n^{3/2}}+O\!\left(\frac{\epsilon_n^2}{Q_n^{5/2}}\right),
\end{equation*}
and under $H_0$, $S_n=O_p(n^{1/2})$ and $Q_n=n\tau(1-\tau)+O_p(n^{1/2})$, whence
\begin{equation}\label{eq:ridge-size}
    x_{\mathrm{obs}}^{(\mathrm{ridge})}-x_{\mathrm{obs}}=O_p(n^{-3/2}),
\end{equation}
which is negligible relative to the $n^{-1/2}$ scale of tail arguments and does not affect second-order coverage.

Tail areas are computed from the sum-scale signed root deviance
\begin{equation*}
    D_{\text{sum}}=2n\{\hat\lambda^\top\mu(\hat\lambda)-K(\hat\lambda)\},\quad 
    r=\operatorname{sgn}(\hat\eta)\sqrt{D_{\text{sum}}},
\end{equation*}
and curvature terms. Skovgaard’s rank-1 form
\begin{equation*}
    w=|\hat\eta|\frac{\sqrt{\pdet(J(\hat\lambda))}\|\nabla g(\hat\mu)\|}{\big(\nabla g(\hat\mu)^\top\Sigma(\hat\lambda)^+\nabla g(\hat\mu)\big)^{1/2}}
\end{equation*}
reduces here to the closed expression
\begin{equation}\label{eq:w-closed}
    w=|\hat\eta|\sqrt{n}p_{\hat\lambda}(1-p_{\hat\lambda})\frac{\|v\|^3\|\nabla g(\hat\mu)\|}{|\nabla g(\hat\mu)\cdot v|},
\end{equation}
since $\pdet(J)=np_{\hat\lambda}(1-p_{\hat\lambda})\|v\|^2$ and $\Sigma(\hat\lambda)^+=\{p_{\hat\lambda}(1-p_{\hat\lambda})\}^{-1}vv^\top/\|v\|^4$ imply $\nabla g(\hat\mu)^\top\Sigma(\hat\lambda)^+\nabla g(\hat\mu)=\{(\nabla g(\hat\mu)\cdot v)^2\}/\{p_{\hat\lambda}(1-p_{\hat\lambda})\|v\|^4\}$. We also use the signed Lugannani-Rice quantity
\begin{equation}\label{eq:q-signed}
    q^{\pm}=\big(\logit(\hat p)-\logit(\tau)\big)\sqrt{n\hat p(1-\hat p)}.
\end{equation}
The Barndorff-Nielsen correction
\begin{equation*}
    r^*=r+\frac{1}{r}\log\frac{r}{w}
\end{equation*}
is used if $|\log(r/w)|\le c_0$ (default $c_0=2$); otherwise the signed Lugannani-Rice approximation
\begin{equation*}
    p_{\mathrm{dir}}(t)\approx\Phi\big(-\operatorname{sgn}(x_{\mathrm{obs}})r\big)+\phi(r)\left(\frac{1}{r}-\frac{1}{q^{\pm}}\right)
\end{equation*}
is applied to the observed tail. Because under $H_0$ the statistic is a monotone transform of a binomial mean, we evaluate tails by default with a mid-p modification
\begin{equation*}
    p_{\mathrm{mid}}(t)=\mathbb{P}\{T_n(t)>x_{\mathrm{obs}}(t)\}+\frac{1}{2}\mathbb{P}\{T_n(t)=x_{\mathrm{obs}}(t)\},
\end{equation*}
or with a small Cornish-Fisher continuity shift; both preserve the $O(n^{-1})$ coverage error when combined with third-order tail accuracy.

For stable floating-point evaluation write $K(\lambda)=\log\{\tau e^A+(1-\tau)e^B\}$ with $A=\lambda_1(\tau-1)+\lambda_2(1-\tau)^2$, $B=\lambda_1\tau+\lambda_2\tau^2$, and use log-sum-exp
\begin{equation*}
    m=\max\{A,B\},\quad K(\lambda)=m+\log\big(\tau e^{A-m}+(1-\tau)e^{B-m}\big),
\end{equation*}
and compute $\hat\lambda^\top\mu(\hat\lambda)-K(\hat\lambda)$ with log1p and expm1 to reduce cancellation. The Kullback-Leibler divergence and logit difference are evaluated as
\begin{equation*}
    \mathrm{KL}(u\|\tau)=u\log\Big(1+\frac{u-\tau}{\tau}\Big)+(1-u)\log\Big(1-\frac{u-\tau}{1-\tau}\Big),\quad 
    \logit(u)-\logit(\tau)=\log\frac{u(1-\tau)}{\tau(1-u)}.
\end{equation*}
For large positive $z$, compute $\Phi(-z)$ as $\tfrac{1}{2}\mathrm{erfc}(z/\sqrt{2})$; if a log-tail is needed,
\begin{equation*}
    \log\Phi(-z)=-\frac{z^2}{2}-\log z-\frac{1}{2}\log(2\pi)+\log\Big(1-\frac{1}{z^2}+\frac{3}{z^4}-\cdots\Big).
\end{equation*}

The one-dimensional fallback relies on the strictly monotone transform
\begin{equation*}
    h(u)=\frac{\sqrt n(\tau-u)}{\sqrt{u(1-\tau)^2+(1-u)\tau^2}},\quad 
    d(u)=\tau^2+u(1-2\tau),
\end{equation*}
whose derivatives are
\begin{equation*}
    h'(u)=-\frac{\sqrt n}{2d(u)^{3/2}}\{(1-2\tau)u+\tau\}<0,\quad 
    h''(u)=\frac{3\sqrt n(1-2\tau)\{(1-2\tau)u+\tau\}}{4d(u)^{5/2}}.
\end{equation*}
Given $x_{\mathrm{obs}}$, solve $h(u)=x_{\mathrm{obs}}$ for $u_x\in(0,1)$ by safeguarded Newton
\begin{equation*}
    u^{(k+1)}=\Pi_{(a,b)}\Big\{u^{(k)}-\frac{h(u^{(k)})-x_{\mathrm{obs}}}{h'(u^{(k)})}\Big\},\quad
    (a,b)=(10^{-12},1-10^{-12}),
\end{equation*}
with bisection if the Newton step fails the interval-shrinking test, and then evaluate the binomial signed LR quantities at $u_x$:
\begin{equation*}
    r=\operatorname{sgn}(u_x-\tau)\sqrt{2n\mathrm{KL}(u_x\|\tau)},\quad 
    q^{\pm}=(\logit(u_x)-\logit(\tau))\sqrt{n u_x(1-u_x)}.
\end{equation*}

Confidence limits are obtained by inverting the directed tails. Writing $f_\downarrow(t)=p_{\downarrow}(t)-\alpha/2$ and $f_\uparrow(t)=p_{\uparrow}(t)-\alpha/2$, monotonicity of $p_{\downarrow}$ on $(-\infty,\hat q_\tau]$ and of $p_{\uparrow}$ on $[\hat q_\tau,\infty)$ permits bracketing $t_L$ and $t_U$ by scanning ordered $\{X_{(k)}\}$ until a sign change, and refining by bisection to $|t_{k+1}-t_k|<10^{-8}$. With ties or discretization we use the mid-p or a Cornish-Fisher shift; both preserve the $O(n^{-1})$ coverage when combined with third-order tails.

We summarize complexity. Sorting the sample once costs $O(n\log n)$. Along the sorted path, $S_n$ and $Q_n$ admit $O(1)$ updates between adjacent order statistics because $S_n$ changes by $\pm(1-\tau)$ and $Q_n$ by a deterministic affine increment; scanning to bracket each endpoint is $O(n)$ in the worst case and much less in practice. Each $p$-value evaluation solves a $3\times 3$ nonlinear system with rank-1 linear algebra; the number of Newton steps is bounded by a small constant due to damping, hence $O(1)$ per evaluation. Bisection requires $O(\log(1/\mathrm{tol}))$ steps (with $\mathrm{tol}=10^{-8}$ this is at most 27). Therefore, the overall complexity for a single $(1-\alpha)$ CI is $O(n\log n)$ dominated by sorting.

Finally we quantify error propagation. Let $p^\star(t)$ denote the exact directed tail and $\tilde p(t)$ the ESA approximation. For the one-parameter binomial tilt, the signed Lugannani-Rice tail error satisfies
\begin{equation*}
    \sup_{|r|\le C}\big|p^\star - \tilde p_{\mathrm{LR}}\big|=O(n^{-3/2}), \quad 
    \sup_{|r|\le C}\frac{\big|p^\star-\tilde p_{r^*}\big|}{p^\star} = O(n^{-3/2}),
\end{equation*}
uniformly on compact $r$-sets, and by algebraic equivalence the same orders hold for the constrained rank-1 construction. Passing from $p$ to an endpoint $t$ near $q_\tau$ introduces the local scale $f(q_\tau)^{-1}$. With $F(t)=\tau+f(q_\tau)(t-q_\tau)+O((t-q_\tau)^2)$ and
\begin{equation*}
    z(t)=\frac{\sqrt n(F(t)-\tau)}{\sqrt{\tau(1-\tau)}},\quad p^\star(t)=\Phi\big(-z(t)\big)+O(n^{-1}),
\end{equation*}
one has at $t=q_\tau$ the sensitivity
\begin{equation*}
    \frac{dp^\star}{dt}\Big|_{t=q_\tau}=-\phi(0)\frac{\sqrt n f(q_\tau)}{\sqrt{\tau(1-\tau)}}.
\end{equation*}
Therefore a tail error $\Delta p(t)=\tilde p(t)-p^\star(t)=O(n^{-3/2})$ transfers to an endpoint perturbation
\begin{equation*}
    \Delta t=-\frac{\Delta p}{p^{\star\prime}(q_\tau)}+o(\Delta p)=O\Big(\frac{1}{n}\Big)\cdot\frac{\sqrt{2\pi\tau(1-\tau)}}{f(q_\tau)},
\end{equation*}
so the coverage error of the inverted CI is $O(n^{-1})$. The ridge perturbation \eqref{eq:ridge-size} induces a tail change of order $O_p(n^{-3/2})$ and therefore an endpoint impact of order $O_p(n^{-2})$, asymptotically negligible. In extreme-quantile regimes with $\tau=\tau_n$ satisfying $\min\{n\tau_n,n(1-\tau_n)\}\to\infty$, all rates remain valid uniformly over compact $r$-sets; choosing $\epsilon_n=cn^{-1/2}$ keeps the orders unchanged.

Collecting identities used by curvature terms, the pseudo-determinant and generalized inverse are
\begin{equation*}
    \pdet\big(\Sigma(\lambda)\big)=p_\lambda(1-p_\lambda)\|v\|^2,\quad 
    \Sigma(\lambda)^+=\frac{vv^\top}{p_\lambda(1-p_\lambda)\|v\|^4},
\end{equation*}
so any expression originally involving $\det\Sigma$ or $\Sigma^{-1}$ is read along $\operatorname{span}(v)$ by replacing them with $\pdet$ and $\Sigma^+$. With these definitions, the entire tail evaluation at a fixed $t$ follows
\begin{itemize}
    \item compute $x_{\mathrm{obs}}$, solve \eqref{eq:app-F} via \eqref{eq:app-alpha}-\eqref{eq:app-z} with damping and trust-region
    \item evaluate $(r,w,q^{\pm})$ then signed $r^*$ or LR tail
\end{itemize}
with a one-dimensional fallback through $h^{-1}$ and a degeneracy check $|\nabla g(\hat\mu)\cdot v|>\delta$ (we use $\delta=10^{-10}$); if violated we switch to the binomial evaluation at $u_x$ for both $r^*$ and LR tails.

\section{Proofs for Section \ref{sec:theory}}\label{app:proofs-theory}

We collect detailed proofs for Lemma \ref{lem:bin-reduction} and Theorems \ref{thm:tail}-\ref{thm:extreme}. Throughout, recall the definitions in Sections \ref{sec:method} and \ref{sec:theory}. All limits are taken as $n\to\infty$ unless stated otherwise.

\begin{proof}[Proof of Lemma \ref{lem:bin-reduction}]
By definition,
\begin{equation*}
    S_n(t)=\sum_{i=1}^n\{\tau-Y_i(t)\}=n\{\tau-\bar Y_n(t)\},
\end{equation*}
and
\begin{equation*}
    Q_n(t)=\sum_{i=1}^n\{\tau-Y_i(t)\}^2=\sum_{i=1}^n\big[Y_i(t)(1-\tau)^2+\{1-Y_i(t)\}\tau^2\big]=n\big\{\bar Y_n(t)(1-\tau)^2+\{1-\bar Y_n(t)\}\tau^2\big\}.
\end{equation*}
Therefore
\begin{equation}\label{eq:lemma-h-identity}
    T_n(t)=\frac{S_n(t)}{\sqrt{Q_n(t)}}=\frac{\sqrt n\{\tau-\bar Y_n(t)\}}{\sqrt{\bar Y_n(t)(1-\tau)^2+\{1-\bar Y_n(t)\}\tau^2}}=h\big(\bar Y_n(t)\big),
\end{equation}
with $h$ given in \eqref{eq:def-h}. Differentiation yields $h'(u)=-\sqrt n\{(1-2\tau)u+\tau\}/\{2d(u)^{3/2}\}<0$ on $(0,1)$, so $h$ is strictly decreasing and continuous. For any $x\in\mathbb{R}$ the equation $h(u)=x$ has a unique solution $u_x=h^{-1}(x)\in(0,1)$, and since $h$ is decreasing,
\begin{equation*}
    \{T_n(t)\ge x\}=\{h(\bar Y_n(t))\ge x\}=\{\bar Y_n(t)\le u_x\}.
\end{equation*}
This proves the lemma.
\end{proof}

\begin{proof}[Proof of Theorem \ref{thm:tail}]
Fix $t$ and denote $\bar Y_n=\bar Y_n(t)$. Under $H_0:F(t)=\tau$ we have $\bar Y_n=n^{-1}\mathrm{Bin}(n,\tau)$. By Lemma \ref{lem:bin-reduction},
\begin{equation}\label{eq:tail-to-binomial}
    \mathbb{P}\{T_n(t)\ge x\}=\mathbb{P}\{\bar Y_n\le u_x\},\quad u_x=h^{-1}(x).
\end{equation}
Write $S=\sum_{i=1}^n Y_i$, so $S\sim\mathrm{Bin}(n,\tau)$ and $\bar Y_n=S/n$. The CGF of $S$ is
\begin{equation}
    K_S(\theta)=\log\mathbb{E}[e^{\theta S}]=n\log\big\{(1-\tau)+\tau e^\theta\big\}.
\label{eq:KS}
\end{equation}
Let $s=nu_x$. The saddlepoint $\hat\theta$ solves $K_S'(\hat\theta)=s$, i.e.
\begin{equation}
    \frac{d}{d\theta}K_S(\theta)\Big|_{\theta=\hat\theta}=n\cdot\frac{\tau e^{\hat\theta}}{(1-\tau)+\tau e^{\hat\theta}}=s,
\label{eq:theta-hat-eq}
\end{equation}
so the tilted success probability equals $u_x$:
\begin{equation}
\hat p:=\frac{\tau e^{\hat\theta}}{(1-\tau)+\tau e^{\hat\theta}}=u_x.
\label{eq:p-hat-equals-ux}
\end{equation}
The signed one-parameter saddlepoint scalars for the lower tail $\mathbb{P}\{S\le s\}$ are
\begin{equation}
r=\operatorname{sgn}(u_x-\tau)\sqrt{2n\,\mathrm{KL}(u_x\|\tau)},\quad
q^{\pm}=\big(\logit(u_x)-\logit(\tau)\big)\sqrt{n\,u_x(1-u_x)},
\label{eq:r-q-binom-signed}
\end{equation}
and the Lugannani-Rice and Barndorff-Nielsen forms give, uniformly on compact $r$-sets,
\begin{equation}
\mathbb{P}\{S\le s\}=\Phi(r)+\phi(r)\left(\frac{1}{r}-\frac{1}{q^{\pm}}\right)+O(n^{-3/2}),\quad
\mathbb{P}\{S\le s\}=\Phi(r^*)\{1+O(n^{-3/2})\}.
\label{eq:LR-BN-binom}
\end{equation}
By \eqref{eq:tail-to-binomial}-\eqref{eq:LR-BN-binom}, for $x=x_{\mathrm{obs}}(t)$ with sign directing the observed tail,
\begin{equation}
p_{\mathrm{dir}}(t)=
\begin{cases}
\Phi(r)+\phi(r)\left(\dfrac{1}{r}-\dfrac{1}{q^{\pm}}\right)+O(n^{-3/2}),& x\ge 0,\\[8pt]
\Phi(-r)+\phi(r)\left(\dfrac{1}{r}-\dfrac{1}{q^{\pm}}\right)+O(n^{-3/2}),& x<0,
\end{cases}
\label{eq:dir-tail-binom}
\end{equation}
which can be summarized as $p_{\mathrm{dir}}(t)=\Phi(-\operatorname{sgn}(x)r)+\phi(r)\big(1/r-1/q^{\pm}\big)+O(n^{-3/2})$. For the constrained rank-1 ESA we have $\hat\mu=(\tau-\hat p,\ \tau^2+\hat p(1-2\tau))^\top$ and the boundary $g(\hat\mu)=0$ reads
\begin{equation}
\tau-\hat p=x_{\mathrm{obs}}\sqrt{\tau^2+\hat p(1-2\tau)}.
\label{eq:h-equality}
\end{equation}
Hence $h(\hat p)=x_{\mathrm{obs}}$ and, by strict monotonicity of $h$, $\hat p=u_x$. Therefore the constrained deviance, $r$, $q^{\pm}$ and $r^*$ match the binomial ones, which proves \eqref{eq:rstar-accuracy}-\eqref{eq:lr-accuracy}.
\end{proof}

\begin{proof}[Proof of Theorem \ref{thm:coverage}]
Let $p^\star(t)$ denote the exact directed tail at $t$, and $\tilde p(t)$ the ESA approximation from Theorem \ref{thm:tail}. Near $t=q_\tau$,
\begin{equation}\label{eq:local-tail-expansion}
    p^\star(t)=\Phi\big(-z(t)\big)+O(n^{-1}),\quad z(t)=\frac{\sqrt n\{F(t)-\tau\}}{\sqrt{\tau(1-\tau)}},
\end{equation}
uniformly for $t$ in a shrinking neighborhood of $q_\tau$. Differentiating at $t=q_\tau$ gives
\begin{equation} \label{eq:dpdt-again}
    \frac{dp^\star}{dt}\Big|_{t=q_\tau}=-\phi(0)\,\frac{\sqrt n\,f(q_\tau)}{\sqrt{\tau(1-\tau)}}.
\end{equation}
By Theorem \ref{thm:tail}, $\tilde p(t)=p^\star(t)+\Delta_n(t)$ with
\begin{equation}\label{eq:tail-error}
    \sup_{|r|\le C}|\Delta_n(t)|=O(n^{-3/2}),
\end{equation}
uniformly in $t$ such that the directed $r$ lies in a compact set. Consider the lower endpoint $t_L$ defined by $p_{\downarrow}(t_L)=\alpha/2$; write $t_L^\star$ for the exact solution and $\tilde t_L$ for the ESA solution. A first-order expansion of $p^\star$ around $t_L^\star$ together with \eqref{eq:dpdt-again} and \eqref{eq:tail-error} yields
\begin{equation}  \label{eq:endpoint-error}
    \tilde t_L-t_L^\star=\frac{\Delta_n(\tilde t_L)}{p^{\star\prime}(t_L^\star)}+o(\Delta_n)=O\Big(\frac{1}{n}\Big)\cdot\frac{\sqrt{\tau(1-\tau)}}{f(q_\tau)\phi(0)}=O(n^{-1}),
\end{equation}
and similarly for the upper endpoint $t_U$. Therefore the ESA interval differs from the exact equal-tailed interval by $O(n^{-1})$ at each endpoint, and the coverage error satisfies the same order; the uniformity over classes with $f(q_\tau)$ bounded and bounded away from zero follows from the uniformity in \eqref{eq:tail-error} and the monotonicity of the directed tails.
\end{proof}

\begin{proof}[Proof of Theorem \ref{thm:heavy}]
Under $H_0$, the bivariate variable $W_i=(\psi_\tau,\psi_\tau^2)$ has a two-point distribution supported on $w_0=(\tau,\tau^2)^\top$ and $w_1=(\tau-1,(1-\tau)^2)^\top$. Hence its MGF and CGF exist for all arguments regardless of the tail behavior of $X$, and the rank-1 derivatives are always well defined. The third-order tail accuracy \eqref{eq:rstar-accuracy}-\eqref{eq:lr-accuracy} and the coverage order \eqref{eq:coverage-order} only require the binomial reduction of Lemma \ref{lem:bin-reduction} and the local continuity $f(q_\tau)>0$ to linearize $F(t)$ near $q_\tau$. No global moment assumptions on $X$ are used. Therefore the stated robustness holds.
\end{proof}

\begin{proof}[Proof of Theorem \ref{thm:extreme}]
Let $\tau=\tau_n$ with $\min\{n\tau_n,n(1-\tau_n)\}\to\infty$. Then the effective information $n\tau_n(1-\tau_n)$ diverges, and the one-parameter binomial saddlepoint expansions \eqref{eq:LR-BN-binom} hold uniformly on compact $r$-sets. By Lemma \ref{lem:bin-reduction} and the identity $\hat p=u_x$, the constrained rank-1 ESA inherits the same scalars $r$, $q^{\pm}$ and $r^*$ uniformly, establishing \eqref{eq:rstar-accuracy}-\eqref{eq:lr-accuracy} for $\tau_n$.

For coverage, write $z(t)$ as in \eqref{eq:local-tail-expansion} with $\tau$ replaced by $\tau_n$, so that $z'(q_{\tau_n})$ is of order $\sqrt{n\tau_n(1-\tau_n)}\,f(q_{\tau_n})$. The same linearization argument as in \eqref{eq:endpoint-error} shows that endpoint errors remain $O(n^{-1})$ uniformly as long as $f(q_{\tau_n})$ stays bounded and bounded away from zero on a neighborhood of $q_{\tau_n}$. If a vanishing ridge $Q_n\leftarrow Q_n+\epsilon_n$ is used with $\epsilon_n\to 0$ (e.g. $\epsilon_n=c\,n^{-1/2}$), the perturbation of $x_{\mathrm{obs}}$ is $O_p(n^{-3/2})$ as in \eqref{eq:ridge-size} and therefore negligible for both third-order tails and second-order endpoints. This proves the theorem.
\end{proof}

\end{appendices}

\bibliography{ref}
\end{document}